\documentclass[12pt]{article}
\usepackage{authblk}
\usepackage{setspace} 

\usepackage[pdfencoding=auto,hidelinks]{hyperref}
\usepackage[inner=1.25in,outer=1in,bottom=1.1in, top=1.25in]{geometry}

\usepackage{graphicx}
\usepackage{subcaption}
\usepackage{float}
\usepackage{setspace}
\usepackage{placeins}
\usepackage{newtxtext,newtxmath} 
\usepackage{titlesec}     
\usepackage[titles]{tocloft} 

\usepackage{amsmath, amsfonts, amssymb}
\usepackage{geometry}
\usepackage{booktabs}
\usepackage[page]{appendix} 
\usepackage{bm}           
\usepackage{threeparttable}
\usepackage[table]{xcolor}
\usepackage{comment}
\usepackage{graphicx}
\usepackage{subcaption}
\usepackage{indentfirst}
\usepackage{multirow}
\usepackage{multicol}

\usepackage{amsthm}
\newtheorem{lemma}{Lemma}
\newtheorem{proposition}{Proposition}

\usepackage{cleveref}
\usepackage{setspace}

\begin{document}
\doublespacing

\title{Bivariate Prior Specification for Bayesian Decision-Making in Early-Phase Clinical Trials}

\author[1,*]{Chengyuan Yang}
\author[1]{Evan Kwiatkowski}

\affil[1]{Department of Biostatistics and Data Science,

University of Texas Health Science Center at Houston, 
Houston, TX, USA}

\affil[*]{\texttt{Chengyuan.Yang@uth.tmc.edu}}

\maketitle

\begin{abstract}
{Bayesian Go/No-Go decisions with co-primary 
endpoints require specifying prior distributions under the 
Normal-Inverse-Wishart framework; however guidance on how prior hyperparameters influence trial decisions remains limited. We propose a calibrated prior specification framework for bivariate Go/No-Go decisions. Skeptical and enthusiastic priors are calibrated so that each assigns a target probability to a clinically relevant decision region. We prove that for any prior precision $\kappa > 0$, a unique scale parameter $\lambda_0$ achieves the target calibration. Operating 
characteristics are evaluated across different $\kappa $ via simulation and applied to a phase 3 telitacicept lupus trial. The simulation result indicates $\kappa$ is the primary driver of prior discrimination. At $\kappa = 1$, the go rate difference between priors was 0.07; at $\kappa = 10$ it reached 0.56, with false positive rates below 0.01. Operating characteristics were robust to the degrees of freedom parameter $\nu_0$ and prior correlation $\rho_0$, supporting a default of $\nu_0 = 2$. In the lupus application, prior sensitivity was negligible at $\kappa = 1$ but at $\kappa = 10$ the enthusiastic go rate was three times the skeptical rate at small sample sizes. The framework reduces prior specification to two choices: the prior center and the prior precision $\kappa$. The identification of $\kappa$ as the dominant parameter, together with the 
cautious choice of  $\kappa$ before the trial, motivates adaptive approaches to prior precision.}

\end{abstract}

\newpage
\section{Introduction}
\label{sec1}
Traditional clinical trial designs frequently rely on a single primary endpoint, such as a binary outcome (e.g., Objective Response Rates) or a continuous measure (e.g., survival time). The primary outcome could determine whether a trial proceeds to the next phase by evaluating treatment efficacy and safety. At the end of early phase studies, Go/No-Go (GNG) decisions are commonly informed by proof of concept evaluations, which are specifically designed to assess whether an investigational treatment shows sufficient promise to warrant progression\cite{preskorn2014role}. These evaluations serve as critical decision points to determine whether a candidate should advance to later stage development. However, relying on a single endpoint often overlooks the complexity and heterogeneity of clinical outcomes, and may introduce bias in trial decision
making\cite{mcleod2019choosing}. For instance, endpoints such as time-to-event (TTE), Objective Response Rate (ORR), overall survival (OS), and progression-free survival (PFS), are widely used as the outcome measure in oncology trials to test patient's response to the drug \cite{delgado2021clinical}. While each provides clear and interpretable measures of treatment effect, simply relying on single endpoint may fail to capture the complex nature of disease progression or patient response. In practice, multiple endpoints often coexist and may provide complementary evidence, such as combining PFS and ORR when OS data are not yet mature, or using multiple  outcomes measured at different stages with interim analyses. The growing complexity of clinical trials has highlighted the importance of addressing multiplicity when multiple endpoints are considered\cite{li2017}; this issue has become a critical component of the drug development process, which requires thorough consideration and rigorous evaluation to meet regular standards, such as FDA approval criteria\cite{FDA2023}. 

The use of co-primary endpoints has motivated substantial methodological development. Many trial designs have been proposed to jointly model correlated outcomes, including Bayesian copula-based approaches\cite{nelsen2006introduction,yin2012clinical, cunanan2014evaluating}. Beyond copula-based methods, there are methods that are tailored to specific endpoint types to address co-primary endpoints. Cho et al. proposed a Bayesian design on a cure rate model that incorporated both TTE and PFS endpoints to evaluate cure status, using a bivariate copula with Cholesky decomposition \cite{higham1990analysis} to estimate correlations between endpoints\cite{Cho26012025}. Yang et al. applied a Bayesian model on multinomial distribution to model sequentially occurring co-primary endpoints \cite{Yang2024}. While these advanced methods have strengthened inference for co-primary endpoints, they have mainly focused on model specification and estimation. Zhao et al. proposed BOP2-DC, a Bayesian optimal phase~II design that incorporates both statistical significance 
and clinical relevance into GNG decisions for multiple endpoints\cite{zhao2023bayesian}. However, like earlier approaches, BOP2-DC focuses on optimizing the decision rule structure rather than examining how prior specification on the mean and covariance 
parameters influences the resulting decision. The question of how prior assumptions affect GNG operating characteristics in the co-primary setting has received little attention, despite the fact that in small samples the prior can substantially influence 
the posterior probability on which the decision is based.

Prior assumptions are particularly consequential in early-phase trial designs, where Bayesian methods are attractive for their natural support of probability-based decision rules and their ability to accommodate small sample sizes. In practice, GNG decisions can be based on posterior probabilities exceeding a pre-defined threshold, for instance, the posterior probability of superiority. Under such designs, the prior distribution interacts directly with the decision rule to determine the trial 
outcome, yet existing methods for co-primary endpoints have not systematically examined how prior specification affects these operating characteristics.

A broadly applicable framework for co-primary endpoints is the 
multivariate normal hierarchical model, since many endpoints can 
be represented as latent Gaussian variables. Under this model, 
it is common to place an inverse-Wishart prior on the covariance 
matrix due to conjugacy. However, common prior choices for variance parameters are known 
to impose undesirable constraints. The inverse-gamma prior on 
scalar variances concentrates excessive probability mass near 
zero even under seemingly noninformative specifications~\cite{gelman2006prior}. In practice, it shrinks the variance aggressively, even though it was meant to be vague. The inverse-Wishart prior on covariance matrices couples the variance and correlation components, limiting independent control over each other~\cite{gelman2006prior}. The scaled inverse-Wishart reparameterization~\cite{gelman2013bayesian} offers greater flexibility, but practical guidance on how prior choices affect GNG decisions remains limited, especially in the small-sample settings typical of early-phase trials.

In this work, we address this gap by conducting a systematic evaluation of prior specification for bivariate GNG decisions under the conjugate Normal-Inverse-Wishart framework. We propose a calibration function that links prior hyperparameters to a prespecified level of skepticism or enthusiasm about treatment benefit. 

We then prove existence and uniqueness of the calibrated scale parameter, and identify the prior precision $\kappa$ as the primary driver of prior influence on trial decisions. We apply this framework retrospectively to a phase~3 telitacicept trial for systemic 
lupus erythematosus. The remainder of this paper is organized 
as follows: Section~\ref{sec2} describes the proposed methods, 
Section~\ref{sec3} presents the simulation study, 
Section~\ref{sec4} presents the motivating example, and 
Section~\ref{sec5} concludes with a discussion.

\section{Methods}
\label{sec2}

\subsection{Bivariate Normal-Inverse-Wishart Model}
\label{sec:model}

Assume each subject is randomly assigned to one of two arms, 
$k \in \{A, B\}$, with $n_k$ subjects per arm. Let 
$\mathbf{Y}_{ki} = (Y_{ki1}, Y_{ki2})^\top$ denote the bivariate 
outcome for subject $i$ in arm $k$, where
\begin{align}
\mathbf{Y}_{ki} \sim \mathcal{N}(\bm{\mu}_k, \Sigma), \quad 
\bm{\mu}_k = \begin{pmatrix} \mu_{k1} \\ \mu_{k2} \end{pmatrix}, \quad
\Sigma = \begin{pmatrix}
\sigma_1^2 & \rho\sigma_1\sigma_2 \\
\rho\sigma_1\sigma_2 & \sigma_2^2
\end{pmatrix},
\end{align}
and observations are independent across arms. We adopt a conjugate 
Normal-Inverse-Wishart (NIW) prior for the arm-specific means and the 
shared covariance matrix:
\begin{align}
\bm{\mu}_k \mid \Sigma &\sim 
  \mathcal{N}\!\left(\bm{\mu}_{k0},\; \tfrac{1}{\kappa}\,\Sigma\right), 
  \qquad
\Sigma \sim \mathrm{InvWishart}(\nu_0,\;  \Lambda),
\label{eq:prior}
\end{align}
where $\bm{\mu}_{k0}$ is the prior mean for arm $k$, and $\kappa > 0$ 
is the prior precision, shared across both arms 
($\kappa_A = \kappa_B \equiv \kappa$). Higher value of $\kappa$ indicates stronger prior confidence\cite{powerprior_rpackage}. The prior scale matrix is 
decomposed as $\Lambda = \lambda_0 \cdot \Lambda_0$, where
\begin{align}
\Lambda_0 = \begin{pmatrix} 1 & \rho_0 \\ \rho_0 & 1 \end{pmatrix}
\end{align}
encodes the prior correlation structure, and $\lambda_0 > 0$ is a scalar 
that controls the magnitude of the prior variance. This decomposition 
allows $\lambda_0$ to be tuned independently of the correlation structure.

Under this prior, the marginal distribution of the treatment effect 
$\bm{\delta} = \bm{\mu}_A - \bm{\mu}_B$ follows a bivariate 
$t$-distribution:
\begin{align}
\bm{\delta} \sim t_{\nu_0+1-d}\!\left(
  \bm{\delta}_0,\; 
  \tfrac{2\lambda_0}{\kappa(\nu_0+1-d)}\,\Lambda_0 
\right),
\label{eq:delta distribution}
\end{align}
where $\bm{\delta}_0 = \bm{\mu}_{A0} - \bm{\mu}_{B0}$ and $d = 2$. 
The factor $2/\kappa$ in the scale reflects the equal-precision 
assumption across arms; the extension to unequal precisions replaces 
$\kappa/2$ with the harmonic mean 
$(\kappa_A^{-1} + \kappa_B^{-1})^{-1}$.

After observing data, the posterior distributions follow standard 
conjugate updating rules:
\begin{align}
\bm{\mu}_k \mid \Sigma, \mathbf{Y} &\sim 
  \mathcal{N}\!\left(
    \tfrac{\kappa\,\bm{\mu}_{k0} + n_k\,\bar{\mathbf{Y}}_k}
          {\kappa + n_k},\; 
    \tfrac{1}{\kappa + n_k}\,\Sigma
  \right), 
\label{eq:post mean} \\[4pt]
\Sigma \mid \mathbf{Y}, \bm{\mu} &\sim 
  \mathrm{InvWishart}\!\left(
    \nu_0 + n_A + n_B,\; 
    \lambda_0\Lambda_0 + S_A + S_B
  \right),
\label{eq:post variance}
\end{align}
where $\bar{\mathbf{Y}}_k = n_k^{-1}\sum_{i=1}^{n_k}\mathbf{Y}_{ki}$ 
is the sample mean for arm $k$, and $S_k$ is the within-arm scatter 
matrix (definitions in Appendix~\ref{sup: proof}). The posterior mean 
for each arm is a weighted average of the prior mean and the sample 
mean, with weight $\kappa/(\kappa + n_k)$ on the prior. Derivations 
are given in Appendix~\ref{sup: proof}.

\subsection{Decision rules}
\label{sec:decisionrules}
Based on the posterior distributions derived in Section~\ref{sec:model}, 
we consider a superiority decision rule comparing the mean efficacy between treatment arms \(A\) and \(B\).
The hypotheses are formulated as
\begin{align}
H_0: \bm{\mu}_A \le \bm{\mu}_B 
\quad \text{versus} \quad
H_1: \bm{\mu}_A > \bm{\mu}_B,
\end{align}
where the inequalities are interpreted componentwise.

Let \(\mathbf{Y}_A\) and \(\mathbf{Y}_B\) denote the observed bivariate outcomes for the two arms.  
The posterior probability of superiority is then 
\begin{equation}
\begin{aligned}
P(\bm{\mu}_A > \bm{\mu}_B \mid \mathbf{Y}_A, \mathbf{Y}_B)
&= P(\mu_{A1} > \mu_{B1},\, \mu_{A2} > \mu_{B2} \mid \mathbf{Y}_A, \mathbf{Y}_B) \\
&= \int_{-\infty}^{\infty} \!\!\int_{-\infty}^{\infty} 
    \!\!\int_{\mu_{B2}}^{\infty} \!\!\int_{\mu_{B1}}^{\infty}
    f(\bm{\mu}_A, \bm{\mu}_B \mid \mathbf{Y}_A, \mathbf{Y}_B)
    \, d\mu_{A1}\, d\mu_{A2}\, d\mu_{B1}\, d\mu_{B2}.
\end{aligned}
\end{equation}
In practice, this is equivalent to $P(\bm{\delta} > \boldsymbol{\tau} \mid \text{data})$ where $\bm{\delta}$ follows the posterior marginal, and $\tau$ is the decision threshold vector.

Then, to induce a dichotomous decision on arm $A$ is more efficacious than arm $B$, we use an indicator function with a threshold $c$:
\begin{equation}
\varphi(\mathbf{y}) = 
\begin{cases}
1 & \text{if } P(\bm{\delta} > \boldsymbol{\tau} \mid \text{data}) > c \\
0 & \text{if } P(\bm{\delta} > \boldsymbol{\tau} \mid \text{data}) \le c,
\end{cases}
\label{decision equation}
\end{equation}
where 1 and 0 indicate the rejection and acceptance of the null hypothesis, respectively.

\subsection{Enthusiastic and Skeptical Priors}
\label{sec:prior_region}

We characterize prior informativeness through two contrasting specifications: the \emph{skeptical prior} and the \emph{enthusiastic prior}. Each is defined by its center $\bm{\delta}_0$ in the treatment-effect space and by the prior 
probability it assigns to a clinically relevant region before observing any data.

The skeptical prior is centered at or near the decision threshold $\boldsymbol{\tau}$, reflecting the belief that the treatment effect is unlikely to exceed the threshold for clinical relevance. Under this specification, the most prior mass concentrates near the threshold, and relatively little mass falls in the decision region $\{\bm{\delta} : \bm{\delta} > \boldsymbol{\tau}\}$ (Figure~\ref{fig:prior_regions}, right panel)

Conversely, the enthusiastic prior is centered at a positive effect above the threshold, reflecting optimism about treatment benefit. Under this specification, relatively little prior mass falls in the 
region of harm $\{\bm{\delta}:\bm{\delta} < \boldsymbol{\tau}\}$ 
(Figure~\ref{fig:prior_regions}, left panel).

To calibrate the scale parameter $\lambda_0$ for each prior, we 
define the calibration function
\begin{align}
\label{eq:calibration function}
f(\lambda_0;\, \nu_0, \kappa, \bm{\delta}_0, \boldsymbol{\tau})
  = P\!\bigl(\bm{\delta} \in \mathcal{R}^*\bigr),
\end{align}
where $\mathcal{R}^*$ denotes the relevant decision region and 
the probability is evaluated under the prior distribution Equation \eqref{eq:delta distribution}. In both cases, we solve for $\lambda_0$ such that $f(\lambda_0) = \alpha$, where $\alpha$ is a small target 
probability (e.g., 0.05). This ensures that both priors assign 
the same controlled amount of mass to their respective tail 
regions, providing a fair basis for comparison.

\begin{figure}
    \centering
    \includegraphics[width=0.8\linewidth]{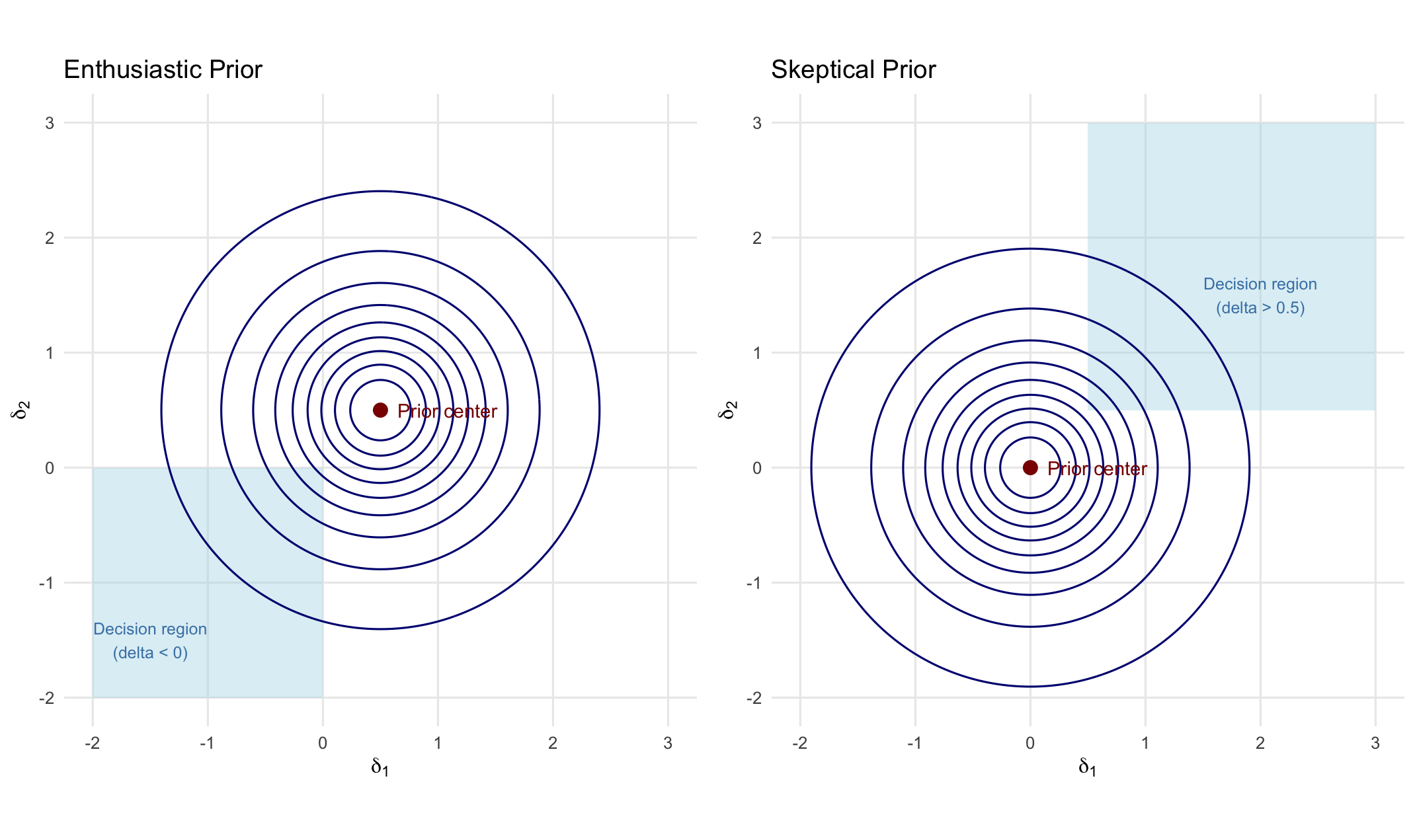}
    \caption{Illustration of the enthusiastic prior (left) centered at 
$\bm{\delta}_0 = (0.5, 0.5)$ and the skeptical prior (right) centered 
at $\bm{\delta}_0 = (0, 0)$, with $\nu_0 = 2$ and $\rho_0 = 0$. Contours show the bivariate $t$ density of the marginal prior on $\bm{\delta}$. Shaded regions indicate the decision regions used for calibration. The scale parameter $\lambda_0$ is enlarged for visual clarity; calibrated values are reported in Supplementary
Table~\ref{tab:s1}.}
    \label{fig:prior_regions}
\end{figure}
\FloatBarrier

\subsection{Recommendation of default hyperparameters}
\label{sec:default hyper}
The marginal prior on $\bm{\delta}$ in 
Equation~\eqref{eq:delta distribution} depends on three 
hyperparameters: the scale $\lambda_0$, the degrees of freedom 
$\nu_0$, and the prior precision $\kappa$. To provide a principled 
default specification, we fix $\nu_0$ and establish that for any 
$\kappa > 0$, a unique $\lambda_0$ can be calibrated to achieve a 
target level of prior skepticism or enthusiasm. 

\begin{proposition}[]
\label{prop:existence} 
Let 
$f(\lambda_0;\, \nu_0, \kappa, \bm{\delta}_0, \boldsymbol{\tau})$ 
denote the calibration function~\eqref{eq:calibration function}, 
defined as the prior probability that the treatment effect 
$\bm{\delta}$ falls in the decision region when the prior is 
centered at $\bm{\delta}_0$ with threshold $\boldsymbol{\tau}$. 
For the skeptical prior, 
$f = P(\bm{\delta} > \boldsymbol{\tau} \mid \bm{\delta}_0)$; 
for the enthusiastic prior, 
$f = P(\bm{\delta} < \boldsymbol{\tau} \mid \bm{\delta}_0)$.

For fixed $\nu_0 > d - 1$, fixed $\kappa > 0$, and 
$\bm{\delta}_0 \neq \boldsymbol{\tau}$, 
$f(\lambda_0)$ is continuous and strictly monotone in $\lambda_0$ 
on $(0, \infty)$ with boundary limits
\[
\lim_{\lambda_0 \to 0} f(\lambda_0) \in \{0, 1\}, 
\qquad 
\lim_{\lambda_0 \to \infty} f(\lambda_0) = L,
\]
where $L$ is the orthant probability of the centered bivariate $t$ 
distribution with correlation $\rho_0$ and $\nu_0 - d + 1$ degrees 
of freedom. 

Consequently, for any target probability 
$\alpha \in (0, L)$, there exists a unique 
$\lambda_0^* > 0$ satisfying $f(\lambda_0^*) = \alpha$. This 
guarantee holds for every $\kappa > 0$: changing $\kappa$ alters 
the scale of the prior on $\bm{\delta}$ through 
Equation~\eqref{eq:delta distribution}, but the calibration function 
remains continuous and monotone in $\lambda_0$, so a unique 
recalibrated $\lambda_0^*(\kappa)$ exists at each $\kappa$.

The proof is provided in Appendix~\ref{app: proposition proof}.
\end{proposition}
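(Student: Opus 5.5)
The plan is to reduce everything to a one-parameter family of orthant probabilities of a fixed standard bivariate $t$ vector. By Equation~\eqref{eq:delta distribution}, with $d=2$ and $\nu=\nu_0-1>0$ (this is where $\nu_0>d-1$ enters), we can write
\[
\bm{\delta} \stackrel{d}{=} \bm{\delta}_0 + s\,\mathbf{Z},
\qquad
s = s(\lambda_0) = \sqrt{\tfrac{2\lambda_0}{\kappa\,\nu}},
\]
where $\mathbf{Z}\sim t_{\nu}(\mathbf{0},\Lambda_0)$ does not depend on $\lambda_0$ or $\kappa$. Set $\mathbf{a}=\boldsymbol{\tau}-\bm{\delta}_0$. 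Then for the skeptical prior $f(\lambda_0)=G(\mathbf{a}/s)$ with $G(\mathbf{x})=P(\mathbf{Z}>\mathbf{x})$. For the enthusiastic prior $f(\lambda_0)=P(\mathbf{Z}<\mathbf{a}/s)=G(-\mathbf{a}/s)$, using the central symmetry $\mathbf{Z}\stackrel{d}{=}-\mathbf{Z}$. So both cases reduce to studying $s\mapsto G(\mathbf{b}/s)$, with $\mathbf{b}=\mathbf{a}$ or $\mathbf{b}=-\mathbf{a}$. Since $s$ is a continuous, strictly increasing bijection of $(0,\infty)$ depending only on $\lambda_0/\kappa$, every property in $s$ transfers to $\lambda_0$ for each fixed $\kappa$. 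In fact $\lambda_0^*(\kappa)=\kappa\,\lambda_0^*(1)$, which gives the final ``for every $\kappa$'' clause directly.

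\textbf{Continuity.} $\mathbf{Z}$ has an everywhere positive, continuous Lebesgue density. Hence its boundaries have probability zero, and $G$ is continuous on $\mathbb{R}^2$ (dominated convergence of indicators). Composed with the continuous map $s\mapsto \mathbf{b}/s$, this makes $f$ continuous.

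\textbf{Monotonicity.} I would make explicit the sign convention implicit in the definitions of the two priors: the skeptical center satisfies $\bm{\delta}_0\le\boldsymbol{\tau}$ componentwise, and the enthusiastic center satisfies $\bm{\delta}_0\ge\boldsymbol{\tau}$. With $\bm{\delta}_0\ne\boldsymbol{\tau}$ this gives $\mathbf{b}\ge\mathbf{0}$ with some $b_j>0$ in both cases.
\begin{itemize}
\item As $s$ increases, $\mathbf{b}/s$ decreases componentwise, so the orthant $\{\mathbf{z}>\mathbf{b}/s\}$ grows and $G(\mathbf{b}/s)$ is nondecreasing.
\item For strictness, take $s<s'$. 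The difference of the two orthants contains a nondegenerate rectangle: a strip in coordinate $j$ of width $b_j/s-b_j/s'>0$, times a half-line in the other coordinate. This set has positive Lebesgue measure, hence positive probability under the positive density.
\end{itemize}
I expect this to be the main obstacle. For mixed-sign $\mathbf{a}$ (one coordinate above the threshold, one below), the two coordinates move in opposite directions and monotonicity can genuinely fail. I would therefore state the sign condition explicitly, as the natural reading of ``centered at/near'' versus ``above'' the threshold. If the authors intend the reverse orientation (all $b_j<0$), the same argument gives a strictly decreasing $f$ with limit $1$ at $0$, which accounts for the $\{0,1\}$ in the statement.

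\textbf{Boundary limits.} As $s\to0$, the coordinate with $b_j>0$ satisfies $b_j/s\to+\infty$. Since $G(\mathbf{b}/s)\le P(Z_j>b_j/s)\to0$, the limit is $0$. As $s\to\infty$, $\mathbf{b}/s\to\mathbf{0}$, and continuity of $G$ gives the limit $L=G(\mathbf{0})=P(\mathbf{Z}>\mathbf{0})$. This is the centered orthant probability with correlation $\rho_0$ and $\nu_0-d+1$ degrees of freedom. Because $\mathbf{Z}$ is a scale mixture of normals, $L=\tfrac14+\tfrac{\arcsin\rho_0}{2\pi}$.

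\textbf{Conclusion.} By the intermediate value theorem, each $\alpha\in(0,L)$ is attained by some $\lambda_0^*$, and strict monotonicity makes $\lambda_0^*$ unique. Repeating the argument at each $\kappa>0$, which only rescales $s$, yields the unique $\lambda_0^*(\kappa)$.
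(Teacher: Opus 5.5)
Your proposal is correct and follows essentially the same route as the paper's appendix proof. Both standardize so that $f$ becomes a bivariate $t$ orthant probability at $\mathbf{z}(\lambda_0)\propto\lambda_0^{-1/2}$, get continuity by composition, get strict monotonicity from the componentwise sign condition on $\boldsymbol{\tau}-\bm{\delta}_0$ and the positive density, compute the two limits, and conclude by the intermediate value theorem. Your refinements (the explicit strip argument for strictness, the exact $L=\tfrac14+\arcsin(\rho_0)/(2\pi)$ via the scale-mixture representation where the paper gives only a large-$\nu_0$ approximation, and $\lambda_0^*(\kappa)=\kappa\,\lambda_0^*(1)$) sharpen the argument without changing the approach.
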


This proposition ensures that the calibration framework is valid across the entire range of $\kappa$. For $\nu_0$, we recommend the default $\nu_0 = \nu_{\min}$, the 
smallest integer satisfying $\nu_0 - d + 1 > 0$, which yields the 
heaviest-tailed and therefore least informative prior within the calibrated family. In the bivariate case ($d = 2$), this gives $\nu_0 = 2$. The simulation results in Supplementary table ~\ref{tab:s2} confirm that operating characteristics are robust to $\nu_0$, with 
correct prior ordering preserved across 
$\nu_0 \in \{2, 4, 10, 50\}$, supporting this default. 

With $\nu_0$ fixed, prior specification reduces to two interpretable choices: the prior center $\bm{\delta}_0$, which encodes the direction and 
magnitude of prior belief about treatment effect, and $\kappa$, controlling how strongly that belief influences the posterior.

\section{Simulations}
\label{sec3}
\subsection{Operating characteristics}
\label{sec:oc_sim}

We simulated a two-arm trial with $n = 10$ subjects per arm. Two 
correlated endpoints were generated from a bivariate normal 
distribution with standard deviation~1 for both endpoints. The 
true correlation was set at $\rho_{\mathrm{true}} \in \{0, 0.6\}$, 
and the prior correlation at $\rho_0 \in \{0, 0.6\}$. True treatment effects were varied across four scenarios (Table~\ref{tab:scenarios}); we report the null and both-effects scenarios in the main text.

For the skeptical prior, both arm-level prior means were set to be equal, giving $\bm{\delta}_0 = (0,0)$. For the enthusiastic prior, the treatment arm prior mean was shifted so that $\bm{\delta}_0 = (0.5, 0.5)$. In both cases, $\lambda_0$ was 
calibrated with a fixed $\alpha$, e.g. 0.05,  in the target tail region. The degrees of freedom were fixed at the default $\nu_0 = 2$ throughout the main analysis. The prior precision was varied across 
$\kappa \in \{1, 5, 10\}$, with $\lambda_0$ recalibrated at each $\kappa$ via Proposition~\ref{prop:existence} to maintain the target tail probability. For each configuration, $10{,}000$ trials were simulated. A Go 
decision was declared if the posterior probability 
$P(\bm{\delta} > \boldsymbol{\tau} \mid \text{data})$ exceeded $c = 0.95$. The Go rate under the null scenario estimates the false positive rate, while under alternative scenarios it estimates power.

\begin{table}[htbp]
\centering
\begin{tabular}{clp{6cm}}
\hline
Scenario & Description of treatment arm & Mean structure (treatment $A$ vs control $B$) \\
\hline
1 & Null scenario
  &$\mu_{A1} - \mu_{B1} = \mu_{A2} - \mu_{B2} = 0$ \\

2 & Both endpoints have positive effects 
  &$\mu_{A1} - \mu_{B1} = \mu_{A2} - \mu_{B2}>0$ \\

3 & One endpoint effective, one nearly null 
  & $\mu_{A1} - \mu_{B1} > 0,\quad \mu_{A2} - \mu_{B2} \approx 0$ 
    \\

4 & Mixed magnitude positive effects 
  & $\mu_{A1} - \mu_{B1} = 2(\mu_{A2} - \mu_{B2})$ \\

\hline
\end{tabular}
\caption{Data-generating scenarios for co-primary endpoints under two arms. Across all scenarios, treatment means $\mu_A\in \{0,0.5,1\}$, while control means $\mu_B$ are fixed as 0 to define the no effect.}
\label{tab:scenarios}
\end{table}

\subsection{Results}
Table \ref{tab:priors} reports the calibrated $
\lambda_0$ values for each $\kappa$. The calibrated values confirm Proposition \ref{prop:existence} that $\lambda_0$ increases with $\kappa$ proportionally to compensate for the tighter prior scale. Figure~\ref{fig:prior_skep_heatmap} visualizes the calibration function for the skeptical prior across $(\kappa, \lambda_0)$ with $\nu_0 = 2$ fixed. The contour lines trace the unique $(\kappa, \lambda_0)$ pairs satisfying $f(\lambda_0) = \alpha$ for $\alpha = 0.05$ and $\alpha = 0.01$, illustrating the monotone relationship established in 
Proposition~\ref{prop:existence}. Figure~\ref{fig:prior_skep_heatmap} shows that as $\lambda_0 \to \infty$, the calibration function converges to the orthant probability $L = 1/4 + \arcsin(\rho_0)/(2\pi)$, which equals $0.25$ when $\rho_0 = 0$ and increases to approximately $0.35$ when $\rho_0 = 0.6$(Lemma~\ref{lem:limits} in Appendix~\ref{app: proposition proof}). By the 
symmetry of the calibration function, the enthusiastic prior 
produces identical $\lambda_0$ values at each $\kappa$, 
as shown in Figure~\ref{fig:prior_enth_heatmap}.

\begin{table}[ht]
\centering
\caption{Prior specifications for the skeptical and enthusiastic 
         priors. For each $\kappa$, $\lambda_0$ is calibrated via 
         the calibration function~\eqref{eq:calibration function} 
         to achieve a target tail probability of $\alpha = 0.05$. 
         All specifications use $\nu_0 = 2$ and $\rho_0 = 0$.}
\label{tab:priors}
\begin{tabular}{lcccccc}
\toprule
& & & \multicolumn{3}{c}{Calibrated $\lambda_0$} \\
\cmidrule(lr){4-6}
Prior & Center $\bm{\delta}_0$ & Threshold $\boldsymbol{\tau}$ 
      & $\kappa = 1$ & $\kappa = 5$ & $\kappa = 10$ \\
\midrule
Skeptical    & $(0,\, 0)$     & $(0.5,\, 0.5)$ 
             & 0.019 & 0.097  &0.194 \\
Enthusiastic & $(0.5,\, 0.5)$ & $(0,\, 0)$     
             & 0.019 & 0.097  &0.194 \\
\bottomrule
\end{tabular}
\begin{tablenotes}
\small
\item By the symmetry of the calibration function, both priors 
      yield identical $\lambda_0$ at each $\kappa$. Target tail 
      probability: $P(\bm{\delta} > \boldsymbol{\tau}) = 0.05$ 
      for the skeptical prior; 
      $P(\bm{\delta} < \boldsymbol{\tau}) = 0.05$ for the 
      enthusiastic prior.
\end{tablenotes}
\end{table}

\begin{figure}[ht]
    \centering
    \begin{subfigure}[b]{0.8\textwidth}
        \centering
        \includegraphics[width=\textwidth]{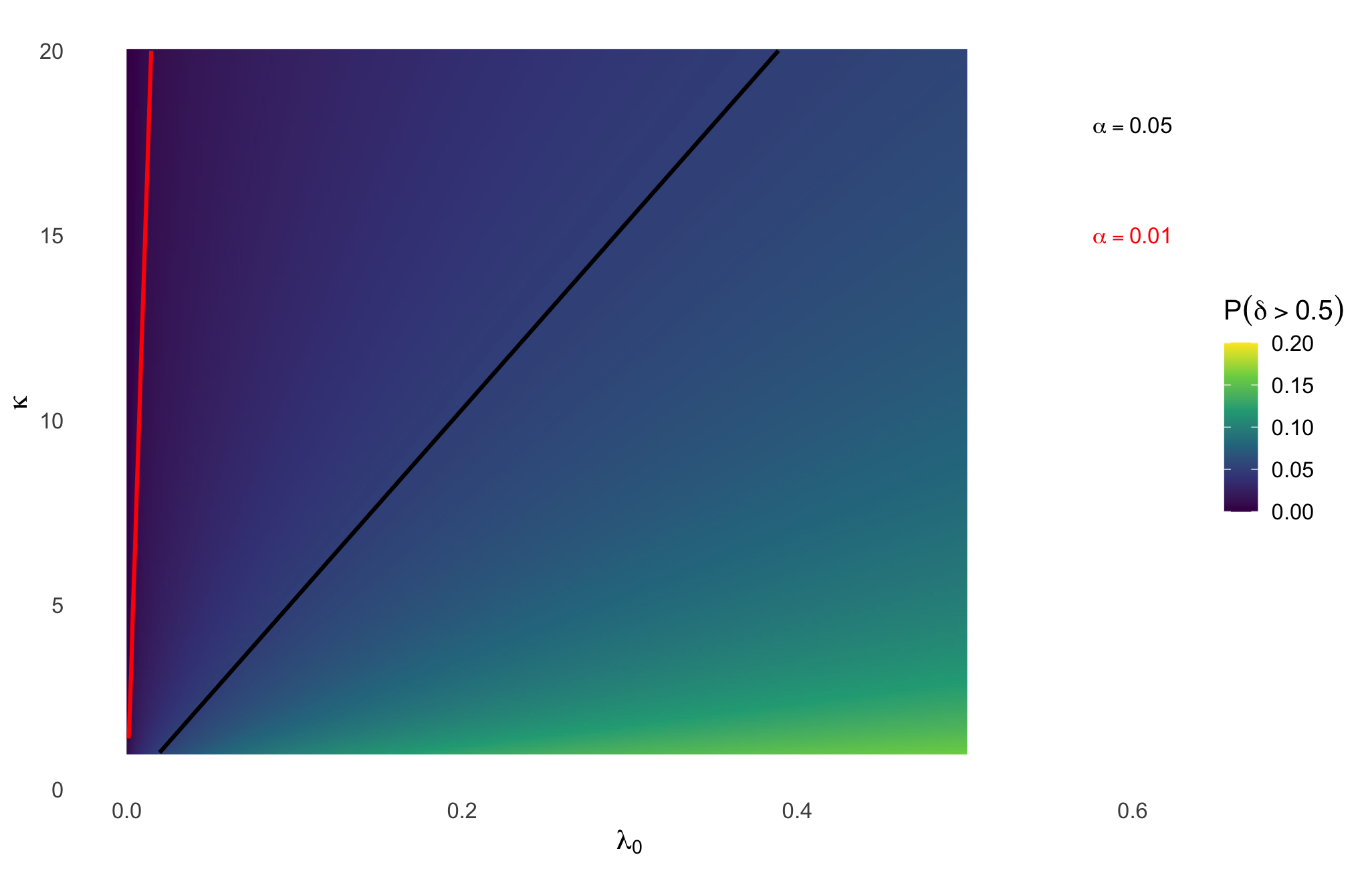}
        \caption{$\rho_0 = 0$}
        \label{fig:heatmap_rho0}
    \end{subfigure}
    \hfill
    \begin{subfigure}[b]{0.8\textwidth}
        \centering
        \includegraphics[width=\textwidth]{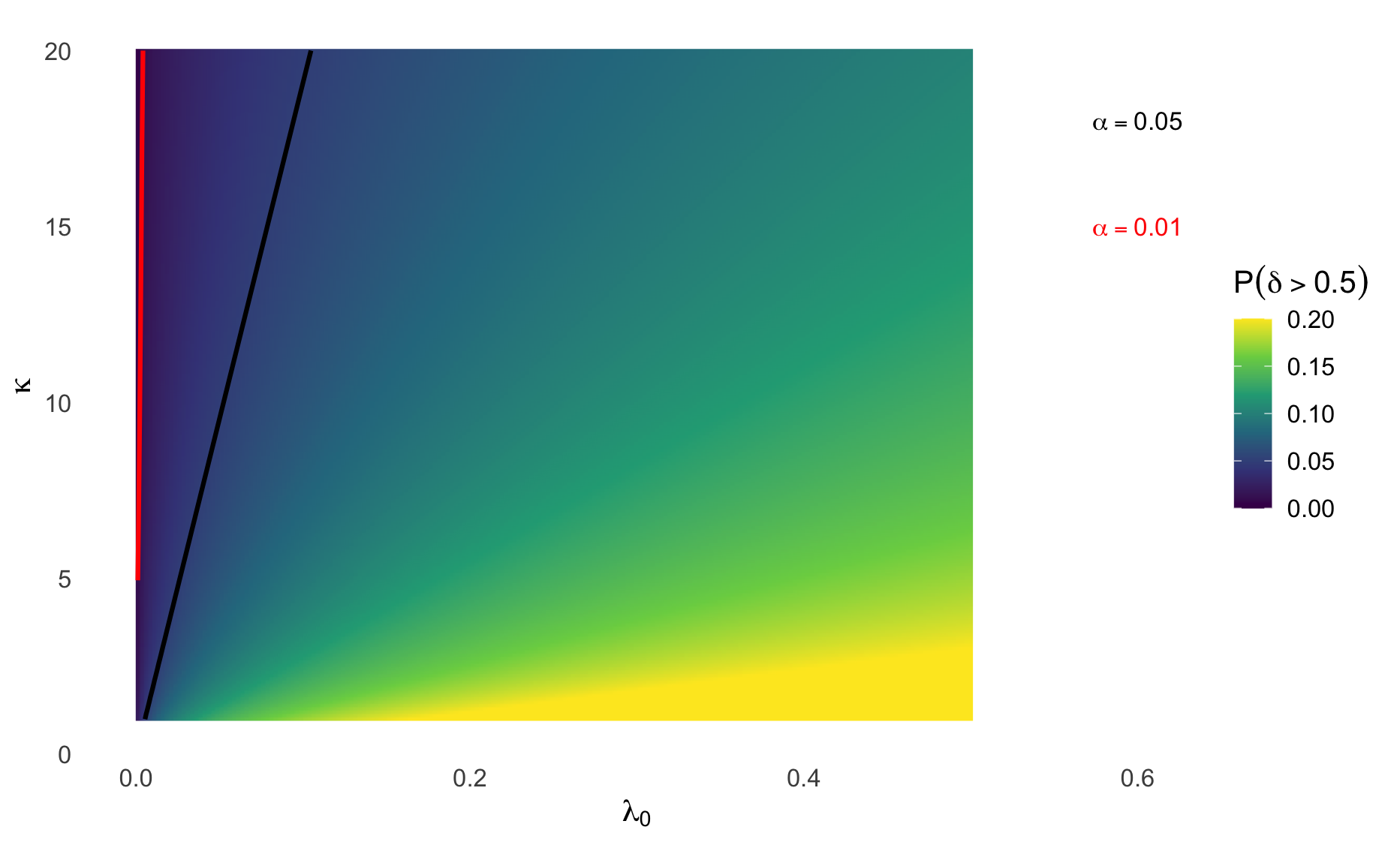}
        \caption{$\rho_0 = 0.6$}
        \label{fig:heatmap_rho06}
    \end{subfigure}
    \caption{Calibration function $P(\bm{\delta} > \boldsymbol{\tau})$ 
             for the skeptical prior across $(\kappa, \lambda_0)$ with 
             $\nu_0 = 2$. Contour lines indicate $\alpha = 0.05$ 
             (black) and $\alpha = 0.01$ (red). The contour lines 
             trace the $(\kappa, \lambda_0)$ pairs satisfying the 
             calibration in Proposition~\ref{prop:existence}.}
    \label{fig:prior_skep_heatmap}
\end{figure}

\FloatBarrier

Table~\ref{tab:kappa_sensitivity}  reports go rates for skeptical and enthusiastic priors across $\kappa \in \{1,5, 10\}$, with $\nu_0$ fixed at its default value of 2 and  recalibrated $\lambda_0$ at each $\kappa$ to maintain the 0.05 target tail probability. The results demonstrate that $\kappa$ is the primary driver of prior distinction in this framework. Especially looking at the strong effect size case, at $\kappa = 1$, the difference between enthusiastic and skeptical go rates is only about 0.07. Increasing to $\kappa = 5$ raises this difference to 0.324, and at $\kappa = 10$ 
it reaches 0.57, where a skeptical prior yields a go rate of only 0.07 compared to 0.63 for the enthusiastic prior. Thus, although this recalibration maintains the same target tail probability across all $\kappa$ values, it does not offset the direct effect of $\kappa$ on the posterior mean weight $\kappa/(\kappa + n)$. 

This amplification is explained by the posterior mean weight. A larger $\kappa$ pulls the posterior mean more strongly toward the prior center, producing 
systematically higher Go rates for the enthusiastic prior and lower Go rates for the skeptical prior from the same observed data. Importantly, false positive rates remain controlled across all 
$\kappa$ values, ( less than 0.01 for the null scenario), confirming that increased prior discrimination does not come at the cost of inflated false positive rate.

Increasing sample size raises go rates substantially and reduces prior sensitivity. At $n = 50$ under the strong effect, go rates exceed 0.99 for both priors regardless of $\kappa$, indicating that the data dominate the posterior at this sample 
size. The $\kappa$ effect is most pronounced at $n = 10$, where prior beliefs carry the greatest relative weight.

True endpoint correlation also increases go rates. Under $\rho_{\mathrm{true}} = 0.6$, go rates are approximately 5 to 10 percentage points higher than under independence across all settings, reflecting the additional information gained from correlated endpoints. However, the prior correlation assumption $\rho_0$ has minimal impact. Results under $\rho_0 = 0.6$ (Supplementary Table~\ref{tab:kappa_sensitivity_rho06}) are similar to those under $\rho_0 = 0$, indicating that misspecification of the prior correlation does not affect operating characteristics. Importantly, false positive rates remain below 0.01 across all $\kappa$ values and correlation structures, confirming that increased prior 
discrimination does not inflate type~I error.

\begin{table}[ht]
\centering
\caption{Go rates for skeptical and enthusiastic priors across 
         $\kappa \in \{1, 5, 10\}$ and sample sizes, under the 
         both-effects scenario, $\nu_0 = 2$, $c = 0.95$, 
         $10{,}000$ replications. For each $\kappa$, $\lambda_0$ 
         is recalibrated to maintain the target tail probability 
         of $0.05$.}
\label{tab:kappa_sensitivity}
\begin{tabular}{llcccccc}
\toprule
& & \multicolumn{3}{c}{$\rho_0 = 0,\; \rho_{\mathrm{true}} = 0$} 
  & \multicolumn{3}{c}{$\rho_0 = 0,\; \rho_{\mathrm{true}} = 0.6$} \\
\cmidrule(lr){3-5} \cmidrule(lr){6-8}
$\kappa$ & Prior & Null & Moderate & Strong 
                  & Null & Moderate & Strong \\
\midrule
\multicolumn{8}{l}{\textit{$n = 10$ per arm}} \\
\addlinespace
\multirow{2}{*}{1}
  & Skeptical    & 0.0021 & 0.0658 & 0.4417 & 0.0122 & 0.1429 & 0.5269 \\
  & Enthusiastic & 0.0027 & 0.0902 & 0.5097 & 0.0168 & 0.1750 & 0.5889 \\
\addlinespace
\multirow{2}{*}{5}
  & Skeptical    & 0.0004 & 0.0228 & 0.2263 & 0.0043 & 0.0659 & 0.3416 \\
  & Enthusiastic & 0.0048 & 0.1158 & 0.5505 & 0.0229 & 0.2102 & 0.6245 \\
\addlinespace
\multirow{2}{*}{10}
  & Skeptical    & 0.0000 & 0.0035 & 0.0696 & 0.0013 & 0.0204 & 0.1598 \\
  & Enthusiastic & 0.0085 & 0.1703 & 0.6321 & 0.0378 & 0.2857 & 0.6887 \\
\midrule
\multicolumn{8}{l}{\textit{$n = 25$ per arm}} \\
\addlinespace
\multirow{2}{*}{1}
  & Skeptical    & 0.0017 & 0.2547 & 0.9265 & 0.0101 & 0.3562 & 0.9387 \\
  & Enthusiastic & 0.0025 & 0.2869 & 0.9387 & 0.0125 & 0.3881 & 0.9505 \\
\addlinespace
\multirow{2}{*}{5}
  & Skeptical    & 0.0005 & 0.1925 & 0.8923 & 0.0069 & 0.2957 & 0.9144 \\
  & Enthusiastic & 0.0039 & 0.3467 & 0.9553 & 0.0188 & 0.4474 & 0.9638 \\
\addlinespace
\multirow{2}{*}{10}
  & Skeptical    & 0.0002 & 0.1320 & 0.8445 & 0.0035 & 0.2291 & 0.8683 \\
  & Enthusiastic & 0.0070 & 0.4405 & 0.9717 & 0.0317 & 0.5318 & 0.9745 \\
\midrule
\multicolumn{8}{l}{\textit{$n = 50$ per arm}} \\
\addlinespace
\multirow{2}{*}{1}
  & Skeptical    & 0.0014 & 0.6114 & 0.9992 & 0.0099 & 0.6702 & 0.9994 \\
  & Enthusiastic & 0.0016 & 0.6351 & 0.9995 & 0.0113 & 0.6914 & 0.9995 \\
\addlinespace
\multirow{2}{*}{5}
  & Skeptical    & 0.0012 & 0.5742 & 0.9986 & 0.0082 & 0.6364 & 0.9990 \\
  & Enthusiastic & 0.0031 & 0.6954 & 0.9997 & 0.0164 & 0.7356 & 0.9995 \\
\addlinespace
\multirow{2}{*}{10}
  & Skeptical    & 0.0005 & 0.5313 & 0.9980 & 0.0064 & 0.5986 & 0.9980 \\
  & Enthusiastic & 0.0054 & 0.7607 & 0.9997 & 0.0260 & 0.7902 & 0.9998 \\
\bottomrule
\end{tabular}
\begin{tablenotes}
\small
\item $\rho_0$: prior correlation in $\Lambda_0$; 
      $\rho_{\mathrm{true}}$: true correlation used to generate 
      data. Null: $\bm{\delta} = (0,0)$; Moderate : 
      $\bm{\delta} = (0.5, 0.5)$; Strong: 
      $\bm{\delta} = (1, 1)$.
\end{tablenotes}
\end{table}
\FloatBarrier

\section{Motivating example: the Telitacicept Lupus Trial}
\label{sec4}
To illustrate the proposed Bayesian decision framework in a real clinical setting, we applied it retrospectively to a phase 3 trial of telitacicept for systemic lupus erythematosus \cite{van2025phase}. The trial randomized 167 participants to the telitacicept arm and 168 to placebo. The primary endpoint was a binary SRI-4 composite responder rate at week 52, defined as a simultaneous reduction of at least 4 points from baseline in the SELENA-SLEDAI score, no new disease activity as measured by the BILAG index\cite{isenberg2005bilag, yee2009bilag}, and no worsening in the Physician’s Global Assessment (PGA) score\cite{petri2005combined}. The secondary endpoints included the continuous component measures of the SRI-4 assessed separately: the mean change from baseline in SELENA-SLEDAI score and the mean change from baseline in PGA score, both at week 52. For this application, we focus on two continuous secondary endpoints. Since the original trial reports treatment improvement as a negative change, we flipped the sign of all outcomes so that a positive change is consistent with the decision rule. 

\subsection*{Data generation} The true treatment means on the flipped scale were taken directly from the paper: $\mu_T$ = (4.95, 0.79) for the telitacicept arm and $\mu_C$ = (1.00, 0.40) for the placebo arm, corresponding to SELENA-SLEDAI and PGA respectively. Since individual patient-level data were unavailable, the endpoint standard deviations were recovered from the reported Wald 95 confidence intervals using the estimation, yielding $\sigma_T$ = (19.3, 0.69) and $\sigma_C$ = (22.0, 0.51) for the treatment and placebo arms, respectively.

Post-treatment SELENA-SLEDAI scores were generated from a lognormal 
distribution to reflect the non-negative and right-skewed nature of 
the score. Specifically, baseline scores $X_0 \sim 
\text{LogNormal}(\mu_{0,\log}, \sigma_{0,\log}^2)$ and post-treatment 
scores $X_1 \sim \text{LogNormal}(\mu_{1,\log}, \sigma_{1,\log}^2)$ 
were generated independently, with the change score computed as 
$\Delta = -(X_1 - X_0)$ on the flipped scale. Lognormal parameters 
were derived from:
\begin{align*}
\mu_{\log} = \log(E[X]) - \frac{\sigma_{\log}^2}{2}, 
\quad \sigma_{\log}^2 &= \log\left(1 + \frac{\text{SD}^2}{E[X]^2}\right), 
\end{align*}
yielding $(\mu_{0,\log}, \sigma_{0,\log}^2) = (2.407, 0.070)$ for 
baseline and $(\mu_{1T,\log}, \sigma_{1T,\log}^2) = (0.755, 2.247)$ 
for the treatment arm post-treatment score. This approach improved 
the simulated SELENA-SLEDAI responder rate from 52\% under normality 
to 77\%, closer to the observed 70.1\%. PGA changes were generated 
from a normal distribution. The analysis model assumed bivariate 
normality throughout, consistent with the NIW conjugate framework.

\subsection*{Correlation Estimation}
Since this trial does not directly report the correlation between SELENA-SLEDAI and PGA score, we estimated the correlation based on the binary response and the marginal responder rate for each outcome individually. Assuming the BILAG component of SRI-4 is independent to SELENA-SLEDAI and PGA, the SRI-4 responder probability can be approximated as the joint probability $P(X_{\text{SELENA}} \geq 4 \ \text{ and } \ X_{\text{PGA}} \geq -0.3) = \Phi_2(-c_1, -c_2;\, \rho)$, where $\Phi_2$ is the standard bivariate normal CDF with correlation $\rho$. The threshold $c_1 \approx -0.527$ was derived from the observed marginal SELENA-SLEDAI rate of 0.701, and $c_2\approx -1.574$ was derivedd from the mean PGA change. Setting $\Phi_2(-c_1, -c_2;\, \rho) = 0.671 $ and solving numerically via bisection, we obtain $\hat{\rho} = 0.239$.

\subsection*{Results}
Table \ref{tab:lupus_kappa} reports go rates for the telitacicept lupus trial application across prior precision values $\kappa \in \{1, 5, 10\}$ and sample sizes $n \in \{20, 50, 80, 167\}$ per arm, with superiority threshold. Prior centers were matched to the SRI-4 clinical efficacy criteria. The skeptical prior was centered at $(4,0.3)$, corresponding to the minimum clinically meaningful threshold, and the enthusiastic prior at $(8,0.6)$, representing a treatment effect twice the minimum threshold. 

At the default prior precison $\kappa = 1$ per arm, go rates are very similar between enthusiastic prior and skeptical prior, at most 0.013 at $n = 20$ and drop to 0.005 at full sample size $n = 167$. This mirrors the general simulation finding that $\kappa = 1$ produces negligible prior sensitivity. 
At $\kappa = 5$, the enthusiastic prior yields a go rate of 0.165 at $n = 20$
compared to 0.087 for the skeptical prior, a difference of 0.078. At
$\kappa = 10$ and $n = 20$, this gap widens to 0.189 with the enthusiastic go
rate (0.282) roughly three times to the skeptical go rate (0.093). This
shows that meaningful prior discrimination emerges in the lupus
application only when $\kappa$ is sufficiently large relative to the sample
size, confirming $\kappa$ as the primary driver of prior influence identified
in the general simulation.

The interaction between $\kappa$ and sample size is also apparent. At $n = 167$, even $\kappa = 10$ produces a modest difference of 0.038 between priors (0.652 versus 0.614), because the large sample size dilutes the prior's contribution to the posterior.

\begin{table}[ht]
\centering
\caption{Go rates for skeptical and enthusiastic priors under the 
         telitacicept lupus trial application across prior precision 
         values $\kappa \in \{1, 5, 10\}$ and 
         sample sizes, $\nu_0 = 2$, $\rho_0 = 0$, 
         $\rho_{\text{true}} = 0$, using control 
         $\boldsymbol{\mu}_C = (1.00, 0.40)$, superiority threshold 
         $\boldsymbol{\delta} > (0, 0)$, $c = 0.95$, 
         10{,}000 replications.}
\label{tab:lupus_kappa}
\begin{tabular}{llcccc}
\toprule
& & \multicolumn{4}{c}{Go Rate} \\
\cmidrule(lr){3-6}
$\kappa$ & Prior & $n = 20$ & $n = 50$ & $n = 80$ & $n = 167$ \\
\midrule
\multirow{2}{*}{1}
  & Skeptical    & 0.0795 & 0.2492 & 0.3764 & 0.5843 \\
  & Enthusiastic & 0.0924 & 0.2618 & 0.3849 & 0.5892 \\
\addlinespace
\multirow{2}{*}{5}
  & Skeptical    & 0.0874 & 0.2611 & 0.3947 & 0.5960 \\
  & Enthusiastic & 0.1652 & 0.3224 & 0.4382 & 0.6170 \\
\addlinespace
\multirow{2}{*}{10}
  & Skeptical    & 0.0932 & 0.2853 & 0.4097 & 0.6135 \\
  & Enthusiastic & 0.2820 & 0.4062 & 0.4927 & 0.6518 \\
\bottomrule
\end{tabular}
\begin{tablenotes}
\small
\item Skeptical prior centred at $(4,\ 0.3)$ with target probability 
      0.05 in the upper tail; enthusiastic prior centred at 
      $(8,\ 0.6)$ with target probability 0.05 in the lower tail, 
      corresponding to the SRI-4 minimum efficacy threshold and 
      twice the threshold respectively. Data generated from a 
      lognormal distribution for SELENA-SLEDAI and normal 
      distribution for PGA. True treatment mean 
      $\boldsymbol{\mu}_T = (4.95,\ 0.79)$.
\end{tablenotes}
\end{table}

\FloatBarrier

\section{Discussion}
\label{sec5}
We have proposed a calibrated prior specification framework for bivariate 
Bayesian Go/No-Go decisions under the Normal--Inverse--Wishart model, and 
systematically examined how the hyperparameters $\lambda_0$, $\nu_0$, and $\kappa$ influence prior informativeness and posterior operating 
characteristics. The calibration function links prior hyperparameters to a prespecified clinical statement (e.g. a specified level of prior skepticism or enthusiasm about treatment benefit) through a linked pair $(\kappa,\lambda_0)$. The prior precision 
$\kappa$ controls how strongly prior beliefs influence the 
posterior through the weight on the prior 
mean, while $\lambda_0$ is recalibrated at each $\kappa$ via Proposition~\ref{prop:existence} to preserve the target tail probability. The degrees of freedom parameter $\nu_0$ governs the tail shape of the prior but, as confirmed by simulation, has minimal impact on operating characteristics, supporting choosing the default value of $\nu_0$.

In contrast, the prior precision $\kappa$ emerged as the primary 
driver of prior discrimination. Although $\lambda_0$ is 
recalibrated at each $\kappa$ to maintain the target tail 
probability, this does not diminish the direct effect of $\kappa$ 
on the posterior mean. At $\kappa = 1$ and $n = 10$ per arm, the 
prior contributes roughly 9\% of the posterior mean, and the 
difference in go rates between priors was modest (0.068 under the 
strong effect). At $\kappa = 10$, the prior weight rises to 50\%, 
and the go rate difference reaches 0.56, with false positive rates 
remaining below 0.01. This identifies $\kappa$ as a clinically 
interpretable design parameter that directly governs how strongly 
prior beliefs influence the trial decision when sample sizes are 
small.

Regarding the role of correlation, the true endpoint correlation 
$\rho_{\mathrm{true}}$ meaningfully increased power, with go 
rates approximately 5 to 10 percentage points higher under 
$\rho_{\mathrm{true}} = 0.6$ compared to independence across all 
configurations. In contrast, the prior correlation $\rho_0$ had 
negligible impact; the maximum absolute difference in go rates 
between $\rho_0 = 0$ and $\rho_0 = 0.6$ was 0.017. This 
asymmetry arises because $\rho_0$ enters the posterior only 
through the prior scale matrix $\Lambda = \lambda_0\Lambda_0$, whose contribution to the posterior trace is bounded. It is negligible under the default $\nu_0 = 2$, constraining $\lambda_0$ to be small 
(Appendix~\ref{app:rho0_insensitivity}). This insensitivity has a 
practical advantage; the practitioner need not estimate or 
specify $\rho_0$ at the design stage, as any value in $(-1, 1)$ 
yields essentially identical operating characteristics.

The telitacicept lupus trial application validated these findings in a real clinical setting. At $\kappa = 1$, the difference in go rates between skeptical and enthusiastic priors was at most 0.013, consistent with the low 
signal-to-noise ratio of the SELENA-SLEDAI endpoint ($\text{SNR} \approx 0.20$). Under these conditions, the likelihood dominates the posterior regardless of prior specification, and the go rate is driven primarily by sample size rather than prior choice. At $\kappa = 10$, 
however, meaningful differences appeared. The enthusiastic go rate at 
the total sample size of 40 was approximately three times the skeptical rate (0.282 versus 0.093). This difference disappeared with increasing sample size, falling to 0.038 at the total sample size of about 300, consistent with the theoretical prediction that larger samples dilute the prior's contribution. This motivating example thus demonstrates that the framework is sensitive to the interaction between $\kappa$, sample size, and the endpoint noise structure, and that prior precision matters most in the small-sample setting where Go/No-Go decisions are typically made.

This model has limitations within this model. First, the analysis model assumes bivariate normality throughout, while the motivating example endpoint was generate from a lognormal distribution to better match the non-negative observed responder rate. The impact of this mismatch between 
the data-generating mechanism and the assumed normal likelihood on posterior calibration deserves further study, particularly for endpoints with heavy skewness. Second, the prior scale matrix $\Lambda_0$ was specified with a unit diagonal, which treats both endpoints as operating on the same scale. In the lupus application, SELENA-SLEDAI has a standard deviation much higher than PGA, so the prior exerts differential influence across endpoints. Third, the correlation between endpoints was estimated indirectly from aggregate statistics under an independence assumption for the BILAG component of the SRI-4 composite, introducing uncertainty that was not propagated through the simulation. 

In summary, the proposed framework provides a principled and interpretable 
approach to prior specification for bivariate Bayesian Go/No-Go decisions. 
The calibration function guarantees existence and uniqueness of the scale 
parameter for any target probability, and the default recommendation of 
$\nu_0 = 2$ simplifies prior parameters to the choice of $\kappa$ and the prior center. The identification of $\kappa$ as the primary driver of prior sensitivity, together with the practical difficulty of specifying the 
optimal $\kappa$ before the trial, provides the central motivation for future work, where $\kappa$ is treated as an adaptive parameter updated at interim to reflect the consistency between prior beliefs and accumulating 
data.

\section{Software}
All analyses were performed in R (version 4.4.0). Multivariate 
$t$ distribution and probabilities were computed using the \texttt{mvtnorm} package~\cite{mvtnorm}. 

\newpage
\bibliographystyle{unsrt}
\bibliography{references}

\newpage
\section{Supplementary Material}
\subsection{Derivation Proof}
\label{sup: proof}
\paragraph{Marginal prior derivation.}
To prove equation \ref{eq:delta distribution}, 
given that the pdf of inverse wishart distribution is approximate to:
\[
  p(\Sigma) 
  =
  \frac{
  \lvert \lambda_0\Lambda_0 \rvert^{\nu_0/2}\lvert \Sigma \rvert^{-(\nu_0 + d + 1)/2}
  \exp\!\Bigl[-\tfrac12 \mathrm{tr}\bigl(\lambda_0\Lambda_0\,\Sigma^{-1}\bigr)\Bigr]}{2^{\nu_0*d/2}\Gamma(\nu_0/2)},
  \quad (d=2).
\]

\begin{align*}
p(\bm{\mu}_A) 
    &= \int p(\bm{\mu}_A \mid \Sigma)\, p(\Sigma)\, d\Sigma \\[6pt]
    &\propto \int (2\pi)^{-\tfrac{1}{2}}\, |\Sigma|^{-\tfrac{1}{2}}
    \exp\!\left(-\tfrac{\kappa}{2} (\bm{\mu}_A - \bm{\mu}_{A0})^\top 
        \Sigma^{-1} (\bm{\mu}_A - \bm{\mu}_{A0})\right) \times 
    \frac{|\lambda_0 \Lambda_0|^{\tfrac{\nu_0}{2}}\, |\Sigma|^{-\tfrac{\nu_0 + d + 1}{2}}
      \exp\!\bigl[-\tfrac{1}{2} \mathrm{tr}(\lambda_0 \Lambda_0 \Sigma^{-1})\bigr]}
      {2^{\tfrac{\nu_0 d}{2}}\, \Gamma(\tfrac{\nu_0}{2})} 
      \, d\Sigma \\[6pt]
    &\propto \int |\Sigma|^{-\tfrac{\nu_0 + d + 2}{2}}
    \exp\!\left(-\tfrac{1}{2} \mathrm{tr}\!\Bigl(\lambda_0 \Lambda_0 \Sigma^{-1}\Bigr)
        -\tfrac{\kappa}{2} (\bm{\mu}_A - \bm{\mu}_{A0})^\top 
        \Sigma^{-1} (\bm{\mu}_A - \bm{\mu}_{A0})\right) d\Sigma \\[6pt]
    &\propto \int |\Sigma|^{-\tfrac{\nu_0 + d + 2}{2}}
    \exp\!\left(-\tfrac{1}{2} \mathrm{tr}\!\Bigl(\lambda_0 \Lambda_0 
        + \kappa (\bm{\mu}_A - \bm{\mu}_{A0})(\bm{\mu}_A - \bm{\mu}_{A0})^\top\Bigr)\Sigma^{-1}\right) d\Sigma \\[6pt]
    &\text{since the integrand is the kernel of } 
    \mathrm{InvWishart}\!\left(\nu_0+1,\;
        \lambda_0 \Lambda_0 + \kappa (\bm{\mu}_A - \bm{\mu}_{A0})(\bm{\mu}_A - \bm{\mu}_{A0})^\top\right), \\[6pt]
    &\propto 
      \frac{2^{\tfrac{(\nu_0+1)d}{2}}\,\Gamma\!\left(\tfrac{\nu_0+1}{2}\right)}
      {\Bigl|\lambda_0 \Lambda_0 + \kappa(\bm{\mu}_A - \bm{\mu}_{A0})(\bm{\mu}_A - \bm{\mu}_{A0})^\top\Bigr|^{\tfrac{\nu_0+1}{2}}}.
\end{align*}

\noindent
Applying the matrix determinant lemma, 
\[
\Bigl|\lambda_0 \Lambda_0 + \kappa(\bm{\mu}_A - \bm{\mu}_{A0})(\bm{\mu}_A - \bm{\mu}_{A0})^\top\Bigr|
= |\lambda_0 \Lambda_0|\left(1 + \tfrac{\kappa}{\lambda_0}(\bm{\mu}_A - \bm{\mu}_{A0})^\top 
   (\Lambda_0)^{-1} (\bm{\mu}_A - \bm{\mu}_{A0})\right).
\]

\noindent
Then, multiplying the normalized constant, Therefore,
\begin{align}
p(\bm{\mu}_A)
&=
   \frac{\Gamma\!\left(\tfrac{\nu_0+1}{2}\right)}
        {\Gamma\!\left(\tfrac{\nu_0}{2}\right)} \,
   \frac{1}{\Bigl[1 + \tfrac{\kappa}{\lambda_0}(\bm{\mu}_A - \bm{\mu}_{A0})^\top 
      (\Lambda_0)^{-1} (\bm{\mu}_A - \bm{\mu}_{A0})\Bigr]^{\tfrac{\nu_0+1}{2}}}.
   \label{eq:marginal mean prior A scaled}
\end{align}

\noindent
Up to the normalizing constant, this is the kernel of a multivariate $t$ distribution:
\[
\bm{\mu}_A \;\sim\; t_{\nu_0+1-d}\!\left(\bm{\mu}_{A0},\; 
    \tfrac{\lambda_0}{\kappa(\nu_0+1-d)} \Lambda_0 \right).
\]

Similarly, 
\[
\bm{\mu}_B \;\sim\; t_{\nu_0+1-d}\!\left(\bm{\mu}_{B0},\; 
    \tfrac{\lambda_0}{\kappa(\nu_0+1-d)} \Lambda_0 \right).
\]

Now, let's define $\bm{\delta} = \bm{\mu}_A - \bm{\mu}_B$, 
\[
  \bm{\delta} \;\big|\; \Sigma
  \;\sim\; \mathcal{N}\!\Bigl(\mathbf{\delta}_{0},\;\tfrac{2}{\kappa}\Sigma\Bigr), \delta_0 =\bm{\mu}_{A0} - \bm{\mu}_{B0} \] 
\begin{align}
p(\bm{\delta})
    &= \int p(\bm{\delta} \mid \Sigma)\, p(\Sigma)\, d\Sigma \notag \\[6pt] 
    &\propto \int (2\pi)^{-\tfrac{d}{2}} 
        \Bigl|\tfrac{2}{\kappa}\Sigma\Bigr|^{-\tfrac{1}{2}}
        \exp\!\left(-\tfrac{1}{2} 
            (\bm{\delta}-\bm{\delta}_0)^\top 
            \Bigl(\tfrac{2}{\kappa}\Sigma\Bigr)^{-1}
            (\bm{\delta}-\bm{\delta}_0)\right) \notag\\
    &\qquad\quad \times 
        \frac{|\lambda_0 \Lambda_0|^{\tfrac{\nu_0}{2}} 
              |\Sigma|^{-\tfrac{\nu_0+d+1}{2}}
              \exp\!\left[-\tfrac{1}{2}\mathrm{tr}(\lambda_0 \Lambda_0 \Sigma^{-1})\right]}
             {2^{\tfrac{\nu_0 d}{2}} \Gamma_d(\tfrac{\nu_0}{2})} 
        \, d\Sigma \notag\\
    &\propto 
    \frac{2^{\tfrac{(\nu_0+1)d}{2}}\,\Gamma\!\bigl(\tfrac{\nu_0+1}{2}\bigr)}
     {\Bigl|\lambda_0 \Lambda_0 + \tfrac{\kappa}{2}(\bm{\delta}-\bm{\delta}_0)(\bm{\delta}-\bm{\delta}_0)^\top\Bigr|^{\tfrac{\nu_0+1}{2}}} \notag\\
     &=
   \frac{\Gamma\!\left(\tfrac{\nu_0+1}{2}\right)}
        {\Gamma\!\left(\tfrac{\nu_0}{2}\right)} \,
   \frac{1}{\Bigl[1 + \tfrac{\kappa}{2\lambda_0}
       (\bm{\delta}-\bm{\delta}_0)^\top \Lambda_0^{-1}(\bm{\delta}-\bm{\delta}_0)\Bigr]^{\tfrac{\nu_0+1}{2}}}.
   \label{eq:marginal prior delta}
\end{align}

\noindent
Thus,
\[
\bm{\delta} \;\sim\; t_{\nu_0+1-d}\!\left(
    \bm{\delta}_0,\;
    \tfrac{2\lambda_0}{\kappa(\nu_0+1-d)} \Lambda_0
\right).
\]
 
\paragraph{Posterior derivation.}

The posterior distribution is proportional to the product of the likelihood and the prior,
\[
p(\boldsymbol{\mu}_A, \boldsymbol{\mu}_B, \Sigma \mid \mathbf{Y})
\;\propto\;
L(\boldsymbol{\mu}_A, \boldsymbol{\mu}_B, \Sigma \mid \mathbf{Y})
\times
p(\boldsymbol{\mu}_A, \boldsymbol{\mu}_B, \Sigma).
\]

\noindent
For the arm-specific means, the conjugate update implies
\[
\boldsymbol{\mu}_A \mid \Sigma, \mathbf{Y}
\sim \mathcal{N}\!\left(\boldsymbol{\mu}_A^*,\, \frac{1}{\kappa_A^*}\Sigma\right),
\]
where
\[
\boldsymbol{\mu}_A^* = 
\frac{\kappa_{A0} \boldsymbol{\mu}_{A0} + n_A \bar{\mathbf{Y}}_A}{\kappa_{A0} + n_A},
\qquad
\kappa_A^* = \kappa_{A0} + n_A,
\]
and analogously for arm $B$.

\medskip
\noindent
For the covariance matrix $\Sigma$, we first write the likelihood conditional on 
$\boldsymbol{\mu}_A, \boldsymbol{\mu}_B$ as
\begin{align}
p(\mathbf{Y}_A, \mathbf{Y}_B \mid \Sigma, \boldsymbol{\mu}_A, \boldsymbol{\mu}_B)
&\propto
\prod_{i=1}^{n_A}
|\Sigma|^{-1/2}
\exp\!\left\{
-\frac{1}{2}
(\mathbf{Y}_{Ai} - \boldsymbol{\mu}_A)^\top
\Sigma^{-1}
(\mathbf{Y}_{Ai} - \boldsymbol{\mu}_A)
\right\}
\notag\\
&\quad\times
\prod_{j=1}^{n_B}
|\Sigma|^{-1/2}
\exp\!\left\{
-\frac{1}{2}
(\mathbf{Y}_{Bj} - \boldsymbol{\mu}_B)^\top
\Sigma^{-1}
(\mathbf{Y}_{Bj} - \boldsymbol{\mu}_B)
\right\} \notag\\
&\propto
|\Sigma|^{-\frac{n_A}{2}}
\exp\!\left(
-\frac{1}{2}
\sum_{i=1}^{n_A}
\mathrm{tr}\!\left[
\Sigma^{-1}
(\mathbf{Y}_{Ai} - \boldsymbol{\mu}_A)
(\mathbf{Y}_{Ai} - \boldsymbol{\mu}_A)^\top
\right]
\right)
\notag\\
&\quad\times
|\Sigma|^{-\frac{n_B}{2}}
\exp\!\left(
-\frac{1}{2}
\sum_{j=1}^{n_B}
\mathrm{tr}\!\left[
\Sigma^{-1}
(\mathbf{Y}_{Bj} - \boldsymbol{\mu}_B)
(\mathbf{Y}_{Bj} - \boldsymbol{\mu}_B)^\top
\right]
\right).\label{eq:likelihood-trace}
\end{align}

Define the scatter matrices
\[
S_A 
= \sum_{i=1}^{n_A}
(\mathbf{Y}_{Ai} - \boldsymbol{\mu}_A)
(\mathbf{Y}_{Ai} - \boldsymbol{\mu}_A)^\top,
\qquad
S_B 
= \sum_{j=1}^{n_B}
(\mathbf{Y}_{Bj} - \boldsymbol{\mu}_B)
(\mathbf{Y}_{Bj} - \boldsymbol{\mu}_B)^\top.
\]
Then \eqref{eq:likelihood-trace} can be written as 
\begin{align}
p(\mathbf{Y}_A, \mathbf{Y}_B \mid \Sigma, \boldsymbol{\mu}_A, \boldsymbol{\mu}_B)
&\propto
|\Sigma|^{-\frac{n_A}{2}}
\exp\!\left(
-\frac{1}{2}\mathrm{tr}\bigl(\Sigma^{-1} S_A\bigr)
\right)
\times
|\Sigma|^{-\frac{n_B}{2}}
\exp\!\left(
-\frac{1}{2}\mathrm{tr}\bigl(\Sigma^{-1} S_B\bigr)
\right) \notag\\
&\propto
|\Sigma|^{-\frac{n_A+n_B}{2}}
\exp\!\left(
-\frac{1}{2}\mathrm{tr}\bigl[\Sigma^{-1}(S_A + S_B)\bigr]
\right).
\label{eq:joint-likelihood-SA-SB}
\end{align}

The Inverse-Wishart prior is
\[
p(\Sigma)
\propto
|\Sigma|^{-(\nu_0 + d + 1)/2}
\exp\!\left(
-\frac{1}{2}\mathrm{tr}\bigl(\Lambda\,\Sigma^{-1}\bigr)
\right),
\]
where $d$ is the dimension ($d=2$).  

Combining the likelihood \eqref{eq:joint-likelihood-SA-SB} and the prior, the conditional posterior of $\Sigma$ given 
$\mathbf{Y}_A, \mathbf{Y}_B, \boldsymbol{\mu}_A, \boldsymbol{\mu}_B$ is

\begin{align}
p(\Sigma \mid \mathbf{Y}_A, \mathbf{Y}_B, \boldsymbol{\mu}_A, \boldsymbol{\mu}_B)
&\propto
|\Sigma|^{-\frac{n_A+n_B}{2}}
\exp\!\left(
-\frac{1}{2}\mathrm{tr}\bigl[\Sigma^{-1}(S_A + S_B)\bigr]
\right)
\times
|\Sigma|^{-(\nu_0 + d + 1)/2}
\exp\!\left(
-\frac{1}{2}\mathrm{tr}\bigl(\Lambda\,\Sigma^{-1}\bigr)
\right) \notag\\
&\propto
|\Sigma|^{-\frac{\nu_0 + n_A + n_B + d + 1}{2}}
\exp\!\left(
-\frac{1}{2}\mathrm{tr}\bigl[\Sigma^{-1}(\Lambda + S_A + S_B)\bigr]
\right).
\end{align}

Recognizing this kernel as that of an Inverse-Wishart distribution, we obtain
\[
\Sigma \mid \mathbf{Y}_A, \mathbf{Y}_B, \boldsymbol{\mu}_A, \boldsymbol{\mu}_B
\sim
\mathrm{InvWishart}\bigl(\nu_0 + n_A + n_B,\; \Lambda + S_A + S_B\bigr).
\]

\begin{align}
\mathbb{E}[\Sigma \mid D_A, D_B] = \frac{S_A + S_B + \Lambda}{n_A + n_B + \nu_0 - d - 1}
\end{align}

\subsection{Insensitivity to prior correlation}
\label{app:rho0_insensitivity}

The prior correlation $\rho_0$ enters the posterior of $\Sigma$ 
only through the scale matrix $\Lambda + S_A + S_B$, where 
$\Lambda = \lambda_0 \Lambda_0$. We show that under the default 
$\nu_0 = 2$, the contribution of $\rho_0$ to the posterior is 
negligible.

\begin{proof} 
By linearity of the trace, from Equation 15, 
\begin{align*}
\mathrm{tr}\!\left[\Sigma^{-1}(\Lambda + S_A + S_B)\right]
= \mathrm{tr}\!\left[\Sigma^{-1}\lambda_0\Lambda_0\right]
+ \mathrm{tr}\!\left[\Sigma^{-1}(S_A + S_B)\right].
\end{align*}

Evaluate the first term. With $\sigma_1 = \sigma_2 = 1$,
\begin{align*}
\Sigma^{-1} = \frac{1}{1-\rho^2}
\begin{pmatrix} 1 & -\rho \\ -\rho & 1 \end{pmatrix},
\qquad
\Lambda_0 = 
\begin{pmatrix} 1 & \rho_0 \\ \rho_0 & 1 \end{pmatrix}.
\end{align*}
Then,
\begin{align*}
\Sigma^{-1}\lambda_0\Lambda_0
&= \frac{\lambda_0}{1-\rho^2}
\begin{pmatrix} 1 & -\rho \\ -\rho & 1 \end{pmatrix}
\begin{pmatrix} 1 & \rho_0 \\ \rho_0 & 1 \end{pmatrix}
= \frac{\lambda_0}{1-\rho^2}
\begin{pmatrix} 
1-\rho\rho_0 & \rho_0-\rho \\ 
-\rho+\rho_0 & 1-\rho\rho_0 
\end{pmatrix}.
\end{align*}

Taking the trace,
\begin{align*}
\mathrm{tr}\!\left[\Sigma^{-1}\lambda_0\Lambda_0\right]
= \frac{2\lambda_0(1-\rho\,\rho_0)}{1-\rho^2}.
\label{eq:prior_trace}
\end{align*}

The second term $\mathrm{tr}[\Sigma^{-1}(S_A + S_B)]$ depends 
only on the data through the scatter matrices and contains no 
prior hyperparameters. Therefore, $\rho_0$ can only influence 
the posterior through the first term 
$\mathrm{tr}[\Sigma^{-1}\lambda_0\Lambda_0]$.

The change in the prior trace contribution when $\rho_0$ shifts 
from $0$ to some value $\rho_0'$ is
\begin{align*}
\Delta_{\rho_0} 
= \left|\frac{2\lambda_0(1-\rho\,\rho_0')}{1-\rho^2} 
  - \frac{2\lambda_0}{1-\rho^2}\right|
= \frac{2\lambda_0\,|\rho|\,|\rho_0'|}{1-\rho^2}.
\end{align*}

For the second term, with $S_A + S_B \approx (n_A+n_B)\hat{\Sigma}$ 
where $\hat{\Sigma} \approx \Sigma$,
\begin{align*}
\mathrm{tr}\!\left[\Sigma^{-1}(S_A+S_B)\right]
\approx (n_A+n_B)\,\mathrm{tr}(I_d) 
= d(n_A+n_B).
\end{align*}

The ratio of the $\rho_0$-induced change to the data contribution 
is therefore
\begin{align}
\frac{\Delta_{\rho_0}}{d(n_A+n_B)}
= \frac{2\lambda_0\,|\rho|\,|\rho_0'|}{(1-\rho^2)\,d(n_A+n_B)}
= O\!\left(\frac{\lambda_0}{n}\right).
\end{align}

Under the default $\nu_0 = 2$, the calibrated $\lambda_0$ is 
small (e.g., $\lambda_0 = 0.02$ at $\kappa = 1$, and 
$\lambda_0 = 0.194$ at $\kappa = 10$). 

Thus, even at the smallest sample size considered ($n = 10$ per 
arm) and the largest prior precision ($\kappa = 10$), the ratio 
remains below $1\%$ for moderate correlations 
($|\rho| \leq 0.6$). The factor $1/(1-\rho^2)$ in the prior 
trace term means that very high true correlations ($|\rho| > 0.9$) could amplify the prior's contribution, but 
the ratio remains bounded by $O(\lambda_0/n)$ and does not 
exceed $3\%$ even at $|\rho| = 0.9$ under the default 
$\nu_0 = 2$. Crucially, $\rho_0$ enters only in the numerator 
through $|\rho_0'| \leq 1$, so changing $\rho_0$ from $0$ to 
any value in $(-1, 1)$ cannot increase the order of the ratio 
beyond $O(\lambda_0/n)$.
\end{proof}

\subsection{Proposition proof}
\label{app: proposition proof}

\begin{lemma}[Continuity]\label{lem:continuity}
For fixed $\nu_0 > d-1$, $\kappa > 0$, and 
$\bm{\delta}_0 \neq \boldsymbol{\tau}$, 
$f(\lambda_0)$ is continuous in $\lambda_0$ on $(0, \infty)$.
\end{lemma}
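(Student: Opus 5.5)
We will prove continuity by reducing $f$ to a fixed multivariate $t$ probability evaluated at a point that moves continuously with $\lambda_0$. By Equation~\eqref{eq:delta distribution}, $\bm{\delta} = \bm{\delta}_0 + s(\lambda_0)\,\mathbf{T}$, where
\[
s(\lambda_0) = \sqrt{\tfrac{2\lambda_0}{\kappa(\nu_0+1-d)}},
\]
and $\mathbf{T}$ follows a standard bivariate $t$ law with $\nu = \nu_0+1-d>0$ degrees of freedom and scale matrix $\Lambda_0$. The law of $\mathbf{T}$ does not depend on $\lambda_0$. For the skeptical case,
\[
f(\lambda_0) = P\bigl(\mathbf{T} > \mathbf{a}(\lambda_0)\bigr), \qquad \mathbf{a}(\lambda_0) = (\boldsymbol{\tau}-\bm{\delta}_0)/s(\lambda_0).
\]
For the enthusiastic case, $f(\lambda_0) = P\bigl(\mathbf{T} < \mathbf{a}(\lambda_0)\bigr)$ with the same $\mathbf{a}$. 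Since $s$ is continuous and strictly positive on $(0,\infty)$, the map $\lambda_0 \mapsto \mathbf{a}(\lambda_0)$ is continuous into $\mathbb{R}^2$.

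It then suffices to show that $G(\mathbf{a}) = P(\mathbf{T} > \mathbf{a})$ and $H(\mathbf{a}) = P(\mathbf{T} < \mathbf{a})$ are continuous on $\mathbb{R}^2$. Continuity of $f$ follows by composition.

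\textbf{Continuity of $G$ and $H$.} We use dominated convergence. The law of $\mathbf{T}$ has the Lebesgue density
\[
g(\mathbf{t}) \propto \bigl[1+\mathbf{t}^\top\Lambda_0^{-1}\mathbf{t}/\nu\bigr]^{-(\nu+d)/2},
\]
which is integrable. This requires $|\rho_0|<1$, so that $\Lambda_0$ is positive definite, and $\nu>0$. Take $\mathbf{a}_m \to \mathbf{a}$. Then $\mathbf{1}\{\mathbf{t} > \mathbf{a}_m\} \to \mathbf{1}\{\mathbf{t} > \mathbf{a}\}$ for every $\mathbf{t}$ off the set $\{t_1=a_1\}\cup\{t_2=a_2\}$. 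That set is a union of two lines and has Lebesgue measure zero, hence $g$-measure zero. The indicators are bounded by $g$, so $G(\mathbf{a}_m)\to G(\mathbf{a})$. The same argument gives continuity of $H$.

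\textbf{Main obstacle.} The only delicate point is that the $t$ marginal must be a genuine absolutely continuous distribution with no atoms on the boundary hyperplanes. Otherwise the indicator convergence fails on a set of positive mass. The hypothesis $\nu_0>d-1$ guarantees $\nu>0$, which makes the kernel normalizable and proper. A nondegenerate $\Lambda_0$ rules out mass on lines. The condition $\bm{\delta}_0\neq\boldsymbol{\tau}$ is not needed for continuity itself; it matters only for the later monotonicity and limit lemmas.

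As an alternative, I could write
\[
G(\mathbf{a}) = \int_{a_2}^{\infty}\!\int_{a_1}^{\infty} g\,dt_1\,dt_2,
\]
and note that it is Lipschitz in each coordinate with constants bounded by $\sup$ of the marginal densities of $T_1$ and $T_2$. Those marginals are univariate $t$ densities and hence bounded. This yields even Lipschitz continuity of $G$, and therefore local Lipschitz continuity of $f$ on $(0,\infty)$.
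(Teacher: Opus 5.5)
Your proposal is correct and follows the same route as the paper: write $f$ as a fixed bivariate $t$ orthant probability evaluated at the standardized point $\mathbf{a}(\lambda_0)=(\boldsymbol{\tau}-\bm{\delta}_0)/s(\lambda_0)\propto\lambda_0^{-1/2}$, which is continuous in $\lambda_0$, and conclude by composition. The paper justifies continuity of the orthant probability only informally (its boundary ``moves continuously''), whereas your dominated-convergence argument, or the Lipschitz bound via bounded univariate $t$ marginals, proves it; you are also right that $\bm{\delta}_0\neq\boldsymbol{\tau}$ is not needed for this lemma.
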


\begin{proof}
Each component of $\mathbf{z}(\lambda_0)$ is proportional to 
$\lambda_0^{-1/2}$, which is continuous on $(0, \infty)$. The 
bivariate $t$ survival function $\bar{F}_{\nu_0+1-d}(\mathbf{z}; R)$ 
is continuous in $\mathbf{z}$ because it is defined by integration 
of a continuous density over a region whose boundary moves 
continuously with $\mathbf{z}$. The composition of continuous 
functions is continuous.
\end{proof}

\begin{lemma}[Monotonicity]\label{lem:monotone}
Under the same conditions, $f(\lambda_0)$ is strictly monotone in 
$\lambda_0$ on $(0, \infty)$.
\end{lemma}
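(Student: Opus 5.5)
The plan is to reduce $f$ to the survival function of a fixed standardized bivariate $t$ variable evaluated at a point that moves monotonically with $\lambda_0$, and then use positivity of the $t$ density to get strictness. Set $\nu = \nu_0+1-d>0$ and $s(\lambda_0) = \sqrt{2\lambda_0/(\kappa\nu)}$. By Equation~\eqref{eq:delta distribution} we can write $\bm{\delta} = \bm{\delta}_0 + s(\lambda_0)\,\mathbf{W}$, where $\mathbf{W}$ is a centered bivariate $t_\nu$ variable with scale matrix $\Lambda_0$. Crucially, the law of $\mathbf{W}$ does not depend on $\lambda_0$ or $\kappa$.

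\textbf{Skeptical prior.} Here
\[
f(\lambda_0) = P\bigl(\mathbf{W} > \mathbf{z}(\lambda_0)\bigr),
\qquad
\mathbf{z}(\lambda_0) = \frac{\boldsymbol{\tau}-\bm{\delta}_0}{s(\lambda_0)} = \mathbf{a}\,\lambda_0^{-1/2},
\qquad
\mathbf{a} = \sqrt{\kappa\nu/2}\,(\boldsymbol{\tau}-\bm{\delta}_0).
\]
This is the same $\mathbf{z}(\lambda_0)$ used in Lemma~\ref{lem:continuity}.

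I will make explicit the configuration that the skeptical prior is meant to describe: $\bm{\delta}_0 \le \boldsymbol{\tau}$ componentwise with $\bm{\delta}_0 \neq \boldsymbol{\tau}$, so $\mathbf{a}\ge \mathbf{0}$ and $\mathbf{a}\neq\mathbf{0}$. This restriction matters, and I expect it to be the main obstacle. If $\boldsymbol{\tau}-\bm{\delta}_0$ had components of opposite sign, one coordinate of $\mathbf{z}$ would decrease while the other increased, and strict monotonicity is not guaranteed. So the first step of the proof is to state this sign condition and note that it holds for the centers used in Section~\ref{sec:prior_region}.

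Under the sign condition, for $\lambda_0<\lambda_0'$ each coordinate satisfies $z_j(\lambda_0') \le z_j(\lambda_0)$, with strict inequality in every coordinate where $a_j>0$. Hence the orthant $O(\lambda_0)=\{\mathbf{w}>\mathbf{z}(\lambda_0)\}$ is contained in $O(\lambda_0')$, which gives $f(\lambda_0)\le f(\lambda_0')$.

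For strictness, the difference $O(\lambda_0')\setminus O(\lambda_0)$ contains a nonempty open set. If $a_1>0$, it contains the open box $\{z_1(\lambda_0')<w_1<z_1(\lambda_0),\ w_2>z_2(\lambda_0)+1\}$. The bivariate $t$ density is strictly positive on all of $\mathbb{R}^2$ because $\Lambda_0$ is positive definite for $|\rho_0|<1$. Therefore this open set has positive probability, and $f$ is strictly increasing.

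\textbf{Enthusiastic prior.} Here $\bm{\delta}_0\ge\boldsymbol{\tau}$ componentwise with $\bm{\delta}_0\ne\boldsymbol{\tau}$, and
\[
f(\lambda_0)=P\bigl(\mathbf{W}<\mathbf{z}(\lambda_0)\bigr)=P\bigl(-\mathbf{W}>-\mathbf{z}(\lambda_0)\bigr).
\]
Since $-\mathbf{W}$ has the same distribution as $\mathbf{W}$ (centered elliptical), this reduces to the skeptical case with $\mathbf{a}$ replaced by $-\mathbf{a}\ge\mathbf{0}$. This is also the symmetry behind the identical calibrated $\lambda_0$ values in Table~\ref{tab:priors}.

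\textbf{Consequences.} In both cases $f$ is strictly increasing. The same representation gives the boundary limits needed later:
\begin{itemize}
\item As $\lambda_0\to0$, some coordinate $z_j\to+\infty$, so $f\to0$.
\item As $\lambda_0\to\infty$, $\mathbf{z}\to\mathbf{0}$, so by continuity of the distribution function $f\to P(\mathbf{W}>\mathbf{0})=L=1/4+\arcsin(\rho_0)/(2\pi)$. This uses the fact that elliptical orthant probabilities equal the Gaussian ones.
\end{itemize}
Existence and uniqueness of $\lambda_0^*$ for $\alpha\in(0,L)$ then follow from Lemma~\ref{lem:continuity}, strict monotonicity, and the intermediate value theorem.
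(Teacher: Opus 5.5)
Your proof is correct and follows the same route as the paper. Both write $f$ as an orthant probability of a fixed standardized bivariate $t$ at $\mathbf{z}(\lambda_0)\propto\lambda_0^{-1/2}$, use the componentwise ordering of $\bm{\delta}_0$ relative to $\boldsymbol{\tau}$ to get nested orthants, obtain strictness from the everywhere-positive $t$ density, and treat the enthusiastic case as the mirror image. Your explicit sign condition and open-set argument make rigorous what the paper assumes implicitly: its proof also takes $\bm{\delta}_0$ below (resp.\ above) $\boldsymbol{\tau}$ componentwise, even though the lemma's stated hypothesis is only $\bm{\delta}_0\neq\boldsymbol{\tau}$.
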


\begin{proof}
As $\lambda_0$ increases, each component of 
$\mathbf{z}(\lambda_0)$ moves toward zero, since 
$|z_j(\lambda_0)| \propto \lambda_0^{-1/2}$. For the skeptical 
prior, $\bm{\delta}_0$ lies below $\boldsymbol{\tau}$ 
componentwise, so $\boldsymbol{\tau} - \bm{\delta}_0 > \mathbf{0}$ 
and $\mathbf{z}(\lambda_0) > \mathbf{0}$. As $\lambda_0$ 
increases, $\mathbf{z}(\lambda_0)$ decreases toward $\mathbf{0}$, 
and the upper orthant probability 
$\bar{F}_{\nu_0+1-d}(\mathbf{z}; R)$ increases (since less of the 
distribution is excluded). Hence $f$ is strictly increasing in 
$\lambda_0$.

For the enthusiastic prior, $\bm{\delta}_0$ lies above 
$\boldsymbol{\tau}$, so by the analogous argument applied to the 
lower orthant, $f$ is also strictly increasing in $\lambda_0$.

In both cases, the strict monotonicity follows from the fact that 
the bivariate $t$ density is strictly positive everywhere, so any 
shift in $\mathbf{z}$ produces a strict change in the orthant 
probability.
\end{proof}

\begin{lemma}[Boundary limits]\label{lem:limits}
Under the same conditions,
\[
\lim_{\lambda_0 \to 0} f(\lambda_0) = 0, 
\qquad
\lim_{\lambda_0 \to \infty} f(\lambda_0) = L,
\]
where $L = P(T_1 > 0,\, T_2 > 0)$ is the positive orthant 
probability of a centered bivariate $t$ distribution with 
correlation matrix $\Lambda_0$ and $\nu_0+1-d$ degrees of freedom.
\end{lemma}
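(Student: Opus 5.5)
I will prove the two limits by writing $f$ as an orthant probability of a standardized bivariate $t$ vector, the same representation used implicitly in Lemmas~\ref{lem:continuity} and~\ref{lem:monotone}. Let $\nu = \nu_0+1-d > 0$ and $s(\lambda_0) = \bigl(2\lambda_0/(\kappa\nu)\bigr)^{1/2}$. By Equation~\eqref{eq:delta distribution} we can write $\bm{\delta} = \bm{\delta}_0 + s(\lambda_0)\,\mathbf{T}$, where $\mathbf{T} \sim t_\nu(\mathbf{0}, \Lambda_0)$ has a fixed law that does not depend on $\lambda_0$. For the skeptical prior this gives
\[
f(\lambda_0) = P\bigl(\mathbf{T} > \mathbf{z}(\lambda_0)\bigr), \qquad \mathbf{z}(\lambda_0) = (\boldsymbol{\tau}-\bm{\delta}_0)/s(\lambda_0).
\]
For the enthusiastic prior it gives $f(\lambda_0) = P(\mathbf{T} < \mathbf{z}(\lambda_0))$ with $\mathbf{z}(\lambda_0) = (\boldsymbol{\tau}-\bm{\delta}_0)/s(\lambda_0)$, which is now componentwise negative. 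Since $\mathbf{T}$ is centrally symmetric, this equals $P(\mathbf{T} > -\mathbf{z}(\lambda_0))$. So both cases reduce to $f(\lambda_0) = P(\mathbf{T} > \mathbf{w}/s(\lambda_0))$ with a fixed vector $\mathbf{w} > \mathbf{0}$. Here I use the working assumption of Lemma~\ref{lem:monotone} that the center lies strictly below (respectively above) the threshold componentwise.

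For the limit $\lambda_0 \to 0$, we have $s(\lambda_0) \to 0$, so each component of $\mathbf{w}/s(\lambda_0)$ tends to $+\infty$. The event $\{\mathbf{T} > \mathbf{w}/s\}$ is contained in $\{T_1 > w_1/s\}$. Its probability is bounded by the univariate $t_\nu$ survival function at $w_1/s$, which tends to $0$ because $T_1$ is a proper random variable. Hence $f \to 0$.

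For the limit $\lambda_0 \to \infty$, we have $\mathbf{w}/s(\lambda_0) \downarrow \mathbf{0}$ monotonically. The events $\{\mathbf{T} > \mathbf{w}/s\}$ increase to $\{\mathbf{T} > \mathbf{0}\}$ up to the boundary set $\{T_1 = 0\} \cup \{T_2 = 0\}$, which has probability zero since $\mathbf{T}$ has a density. Continuity of probability from below, or dominated convergence applied to the indicator, then gives $f \to P(T_1 > 0, T_2 > 0) = L$.

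To obtain the closed form quoted in the text, I will write $\mathbf{T} = \mathbf{Z}/\sqrt{W/\nu}$ with $\mathbf{Z} \sim \mathcal{N}(\mathbf{0}, \Lambda_0)$ independent of $W$. Because the positive scalar $\sqrt{W/\nu}$ does not change signs, $L = P(Z_1>0, Z_2>0)$. Sheppard's formula then gives $L = 1/4 + \arcsin(\rho_0)/(2\pi)$.

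The main obstacle is the standing sign assumption. The hypothesis $\bm{\delta}_0 \ne \boldsymbol{\tau}$ alone does not force $\boldsymbol{\tau}-\bm{\delta}_0$ to have both components of one sign. If a component of $\mathbf{w}$ is zero or of the wrong sign, the $\lambda_0 \to 0$ limit changes: it equals $1$ when both components have the wrong sign, which is why the proposition states the limit as lying in $\{0,1\}$, and it can be intermediate in mixed cases. I will therefore state the lemma under the componentwise ordering that defines the skeptical and enthusiastic priors in Section~\ref{sec:prior_region}, which is the setting used in Lemma~\ref{lem:monotone}. I will also remark that for a degenerate component the same decomposition yields the appropriate limit.
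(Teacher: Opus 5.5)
Your proof is correct and follows the paper's route. You standardize $\bm{\delta}$ so that $f(\lambda_0)$ becomes an orthant probability of a fixed bivariate $t$ vector at $\mathbf{z}(\lambda_0)\propto\lambda_0^{-1/2}(\boldsymbol{\tau}-\bm{\delta}_0)$, then let $\mathbf{z}$ run to $\pm\boldsymbol{\infty}$ or to $\mathbf{0}$; the componentwise ordering you make explicit is exactly what the paper assumes tacitly, and it is genuinely needed, since $\bm{\delta}_0\neq\boldsymbol{\tau}$ alone does not give the limit $0$.

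Your scale-mixture observation shows that $L=1/4+\arcsin(\rho_0)/(2\pi)$ holds exactly for every $\nu$, which is stronger than the paper's large-$\nu_0$ approximation. One small correction to your closing remark: a single wrong-signed component with the other correctly signed still gives limit $0$, and an intermediate limit ($1/2$) arises only when one component of $\mathbf{w}$ vanishes and the other has the wrong sign.
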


\begin{proof}
\emph{(i) As $\lambda_0 \to 0$:} each component of 
$\mathbf{z}(\lambda_0)$ diverges to $+\infty$ (skeptical case) or 
$-\infty$ (enthusiastic case). In the skeptical case, the upper 
orthant probability $P(\mathbf{T} > \mathbf{z})$ vanishes as 
$\mathbf{z} \to +\boldsymbol{\infty}$. In the enthusiastic case, 
the lower orthant probability $P(\mathbf{T} < \mathbf{z})$ 
vanishes as $\mathbf{z} \to -\boldsymbol{\infty}$. Hence 
$f(\lambda_0) \to 0$ in both cases.

\emph{(ii) As $\lambda_0 \to \infty$:} 
$\mathbf{z}(\lambda_0) \to \mathbf{0}$, so
\[
f(\lambda_0) \to P(\mathbf{T} > \mathbf{0}) 
= P(T_1 > 0,\, T_2 > 0) = L.
\]
When $\rho_0 = 0$, the bivariate $t$ orthant probability equals 
$1/4$ by symmetry. More generally, for large $\nu_0$ the 
$t$ distribution is well approximated by a bivariate normal, 
yielding~\cite{abrahamson1964orthant}
\[
L \approx \frac{1}{4} + \frac{1}{2\pi}\arcsin(\rho_0).
\]
Since $L \in (0,1)$ and $f$ is continuous and strictly increasing 
from $0$ to $L$, the intermediate value theorem guarantees a unique 
$\lambda_0^* > 0$ satisfying $f(\lambda_0^*) = \alpha$ for any 
$\alpha \in (0, L)$. \qed

\noindent\textit{Proof of Proposition~\ref{prop:existence}.} 
Lemma~\ref{lem:continuity} establishes that $f(\lambda_0)$ is 
continuous on $(0, \infty)$, Lemma~\ref{lem:monotone} that it is 
strictly monotone, and Lemma~\ref{lem:limits} that it maps onto 
$(0, L)$. By the intermediate value theorem, every 
$\alpha \in (0, L)$ is attained, and strict monotonicity guarantees 
the solution $\lambda_0^*$ is unique. Since these properties hold 
for any fixed $\kappa > 0$, the existence and uniqueness of 
$\lambda_0^*(\kappa)$ extends to all $\kappa$. \qed
\end{proof}

\setcounter{table}{0}
\renewcommand{\thetable}{S\arabic{table}}

\subsection{TABLES}

\begin{table}[ht]
\centering
\caption{Prior specifications for the eight prior distributions. For each prior, 
         $\lambda_0$ is calibrated via calibration function where the 
         tail probability at the specified threshold equals the target 
         probability, reported separately for each value of $\nu_0$. 
         All priors use $\kappa_{A0} = \kappa_{B0} = 1$ and $\rho_0 = 0$.}
\label{tab:s1}
\resizebox{\textwidth}{!}{%
\begin{tabular}{lcccccccc}
\toprule
Prior & Center & Threshold  
      & Target Prob. & Region 
      & \multicolumn{4}{c}{$\lambda_0$} \\
\cmidrule(lr){6-9}
& & & & & $\nu_0 = 2$ & $\nu_0 = 4$ & $\nu_0 = 10$ & $\nu_0 = 50$ \\
\midrule
\multicolumn{9}{l}{\textit{Skeptical priors}} \\
Skeptical            & $(0.0,\ 0.0)$   & $(0.5,\ 0.5)$   & 0.05 & Upper &0.02 &0.2 &0.84 &5.16 \\
Extreme skeptical    & $(0.0,\ 0.0)$   & $(0.5,\ 0.5)$   & 0.01 & Upper &0.01 &0.04 &0.26 &1.77 \\
Extreme skeptical v2 & $(-1.0,\ -1.0)$ & $(-0.5,\ -0.5)$ & 0.05 & Upper &0.02 &0.2 &0.84 &5.16 \\
Extreme skeptical v3 & $(-1.0,\ -1.0)$ & $(-0.5,\ -0.5)$ & 0.01 & Upper &0.01 &0.04 &0.26 &1.77 \\
\addlinespace
\multicolumn{9}{l}{\textit{Enthusiastic priors}} \\
Enthusiastic            & $(0.5,\ 0.5)$ & $(0.0,\ 0.0)$ & 0.05 & Lower &0.02 &0.2 &0.84 &5.16 \\
Extreme enthusiastic    & $(0.5,\ 0.5)$ & $(0.0,\ 0.0)$ & 0.01 & Lower &0.01 &0.04 &0.26 &1.77 \\
Extreme enthusiastic v2 & $(1.5,\ 1.5)$ & $(1.0,\ 1.0)$ & 0.05 & Lower &0.02 &0.2 &0.84 &5.16 \\
Extreme enthusiastic v3 & $(1.5,\ 1.5)$ & $(1.0,\ 1.0)$ & 0.01 & Lower &0.01 &0.04 &0.26 &1.77 \\
\bottomrule
\end{tabular}%
}
\begin{tablenotes}
\small
\item Prior center denotes the location parameter $(\delta_1, \delta_2)$ 
      of the marginal multivariate $t$ prior on the treatment effect 
      $\delta = \mu_A - \mu_B$. Threshold denotes the boundary at which 
      the tail probability is calibrated. Upper tail: 
      $P(\delta > \text{threshold}) = \text{target prob.}$; Lower tail: 
      $P(\delta < \text{threshold}) = \text{target prob.}$ 
\end{tablenotes}
\end{table}

\begin{table}[ht]
\centering
\caption{Operating characteristics of the Bayesian Go/No-Go decision rule 
         under the both-effects scenario with $n = 10$ per arm, 
         $\rho_0 = 0$, $\rho_{\text{true}} = 0$, and 
         10{,}000 simulation replications. Three true effect sizes are 
         considered: null ($\boldsymbol{\delta} = \mathbf{0}$), moderate 
         ($\boldsymbol{\delta} = 0.5$), and strong ($\boldsymbol{\delta} = 1$), 
         where $\boldsymbol{\delta} = (\delta_1, \delta_2)$.
         Go rate is the proportion of simulations in which 
         $P(\delta_1 > 0, \delta_2 > 0 \mid \text{data}) > 0.95$. 
         Under the null scenario, go rate corresponds to the empirical 
         false positive rate. Prior specifications are detailed in 
         Table~\ref{tab:priors}.}
\label{tab:s2}
\begin{tabular}{llcccccc}
\toprule
& & \multicolumn{3}{c}{Go Rate} 
  & \multicolumn{3}{c}{Avg.\ Posterior Prob.} \\
\cmidrule(lr){3-5} \cmidrule(lr){6-8}
$\nu_0$ & Prior & Null & Moderate & Strong 
               & Null & Moderate & Strong \\
\midrule
\multirow{2}{*}{2}
  & Skeptical    & 0.0021 & 0.0658 & 0.4417 & 0.2502 & 0.6244 & 0.8835 \\
  & Enthusiastic & 0.0027 & 0.0902 & 0.5097 & 0.2848 & 0.6609 & 0.9018 \\
\addlinespace
\multirow{2}{*}{4}
  & Skeptical    & 0.0024 & 0.0801 & 0.4824 & 0.2502 & 0.6320 & 0.8910 \\
  & Enthusiastic & 0.0042 & 0.1081 & 0.5487 & 0.2854 & 0.6689 & 0.9086 \\
\addlinespace
\multirow{2}{*}{10}
  & Skeptical    & 0.0048 & 0.1193 & 0.5692 & 0.2503 & 0.6495 & 0.9071 \\
  & Enthusiastic & 0.0076 & 0.1541 & 0.6304 & 0.2869 & 0.6872 & 0.9232 \\
\addlinespace
\multirow{2}{*}{50}
  & Skeptical    & 0.0223 & 0.2806 & 0.7764 & 0.2506 & 0.6970 & 0.9428 \\
  & Enthusiastic & 0.0305 & 0.3333 & 0.8145 & 0.2911 & 0.7360 & 0.9546 \\
\bottomrule
\end{tabular}
\begin{tablenotes}
\small
\item 
      For each $\nu_0$, $\lambda_0$ is uniquely determined by the 
      calibration function (Proposition~\ref{prop:existence}).
      All priors use $\kappa_{A0} = \kappa_{B0} = 1$.
\end{tablenotes}
\end{table}

\begin{table}[ht]
\centering
\caption{Go rates for skeptical and enthusiastic priors across 
         $\kappa \in \{1, 5, 10\}$ and sample sizes, under the 
         both-effects scenario, $\nu_0 = 2$, $\rho_0 = 0.6$, 
         $c = 0.95$, $10{,}000$ replications. For each $\kappa$, 
         $\lambda_0$ is recalibrated to maintain the target tail 
         probability of $0.05$.}
\label{tab:kappa_sensitivity_rho06}
\begin{tabular}{llcccccc}
\toprule
& & \multicolumn{3}{c}{$\rho_0 = 0.6,\; \rho_{\mathrm{true}} = 0$} 
  & \multicolumn{3}{c}{$\rho_0 = 0.6,\; \rho_{\mathrm{true}} = 0.6$} \\
\cmidrule(lr){3-5} \cmidrule(lr){6-8}
$\kappa$ & Prior & Null & Mod. & Strong 
                  & Null & Mod. & Strong \\
\midrule
\multicolumn{8}{l}{\textit{$n = 10$ per arm}} \\
\addlinespace
\multirow{2}{*}{1}
  & Skeptical    & 0.0012 & 0.0696 & 0.4548 & 0.0141 & 0.1418 & 0.5327 \\
  & Enthusiastic & 0.0018 & 0.0937 & 0.5229 & 0.0185 & 0.1755 & 0.5939 \\
\addlinespace
\multirow{2}{*}{5}
  & Skeptical    & 0.0001 & 0.0224 & 0.2347 & 0.0050 & 0.0674 & 0.3404 \\
  & Enthusiastic & 0.0035 & 0.1197 & 0.5674 & 0.0253 & 0.2111 & 0.6310 \\
\addlinespace
\multirow{2}{*}{10}
  & Skeptical    & 0.0000 & 0.0032 & 0.0696 & 0.0013 & 0.0252 & 0.1609 \\
  & Enthusiastic & 0.0078 & 0.1786 & 0.6491 & 0.0412 & 0.2837 & 0.6957 \\
\midrule
\multicolumn{8}{l}{\textit{$n = 25$ per arm}} \\
\addlinespace
\multirow{2}{*}{1}
  & Skeptical    & 0.0011 & 0.2577 & 0.9315 & 0.0129 & 0.3654 & 0.9397 \\
  & Enthusiastic & 0.0017 & 0.2920 & 0.9414 & 0.0153 & 0.3982 & 0.9480 \\
\addlinespace
\multirow{2}{*}{5}
  & Skeptical    & 0.0005 & 0.1958 & 0.8987 & 0.0087 & 0.3034 & 0.9131 \\
  & Enthusiastic & 0.0029 & 0.3534 & 0.9580 & 0.0244 & 0.4604 & 0.9631 \\
\addlinespace
\multirow{2}{*}{10}
  & Skeptical    & 0.0001 & 0.1334 & 0.8471 & 0.0050 & 0.2368 & 0.8668 \\
  & Enthusiastic & 0.0056 & 0.4470 & 0.9736 & 0.0360 & 0.5409 & 0.9763 \\
\midrule
\multicolumn{8}{l}{\textit{$n = 50$ per arm}} \\
\addlinespace
\multirow{2}{*}{1}
  & Skeptical    & 0.0011 & 0.6152 & 0.9988 & 0.0119 & 0.6718 & 0.9995 \\
  & Enthusiastic & 0.0019 & 0.6377 & 0.9991 & 0.0136 & 0.6907 & 0.9995 \\
\addlinespace
\multirow{2}{*}{5}
  & Skeptical    & 0.0006 & 0.5793 & 0.9987 & 0.0093 & 0.6411 & 0.9989 \\
  & Enthusiastic & 0.0033 & 0.6984 & 0.9997 & 0.0191 & 0.7394 & 0.9996 \\
\addlinespace
\multirow{2}{*}{10}
  & Skeptical    & 0.0004 & 0.5309 & 0.9982 & 0.0070 & 0.6049 & 0.9983 \\
  & Enthusiastic & 0.0056 & 0.7659 & 0.9998 & 0.0284 & 0.7893 & 0.9998 \\
\bottomrule
\end{tabular}
\begin{tablenotes}
\small
\item $\rho_0$: prior correlation in $\Lambda_0$; 
      $\rho_{\mathrm{true}}$: true correlation used to generate 
      data. Null: $\bm{\delta} = (0,0)$; Moderate (Mod.): 
      $\bm{\delta} = (0.5, 0.5)$; Strong: 
      $\bm{\delta} = (1, 1)$.
\end{tablenotes}
\end{table}
\FloatBarrier

\setcounter{figure}{0}
\renewcommand{\thefigure}{S\arabic{figure}}

\subsection{Figures}
\begin{figure}[ht]
    \centering
    \begin{subfigure}[b]{0.8\textwidth}
        \centering
        \includegraphics[width=\textwidth]{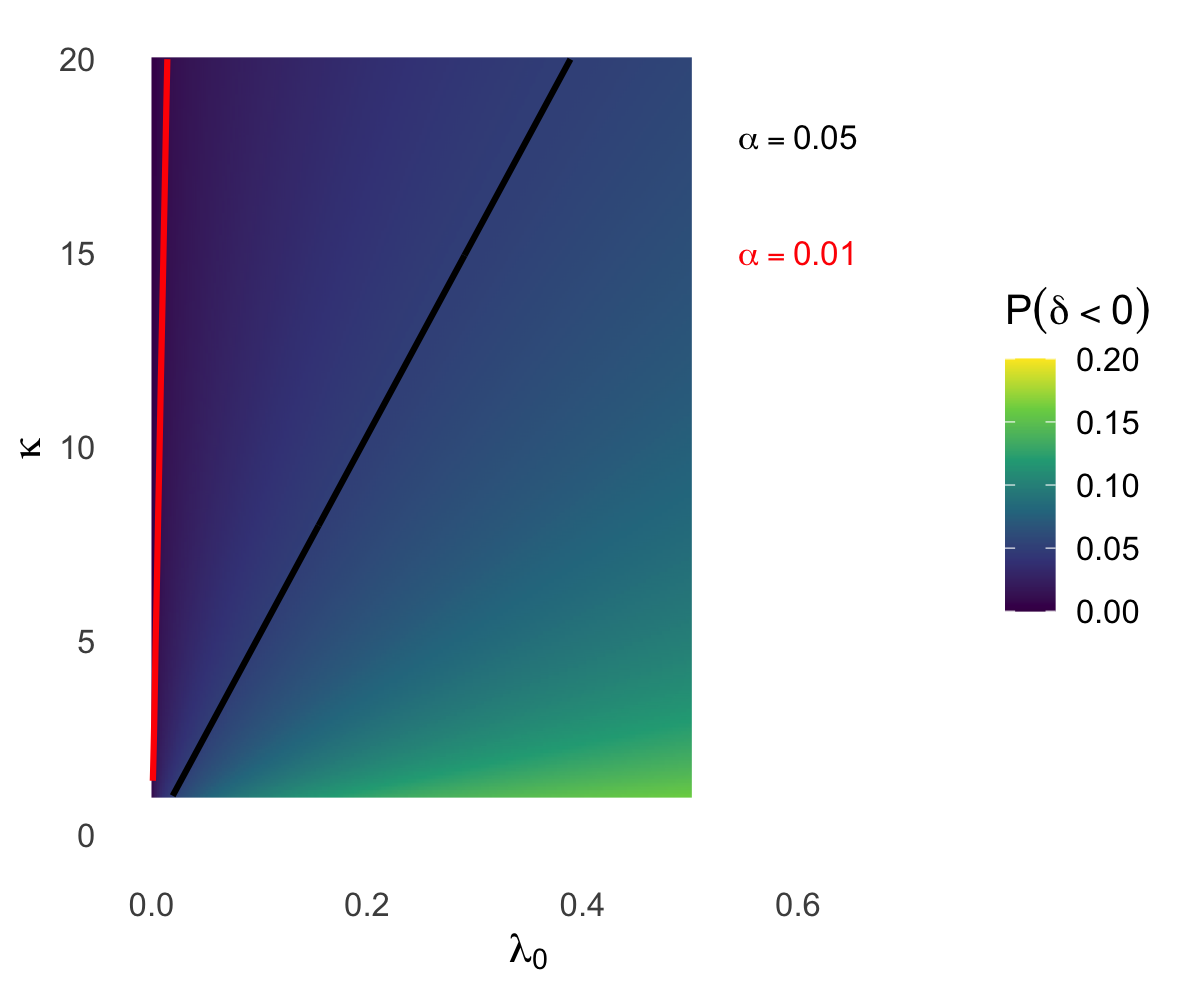}
        \caption{$\rho_0 = 0$}
        \label{fig:heatmap_rho0}
    \end{subfigure}
    \hfill
    \begin{subfigure}[b]{0.8\textwidth}
        \centering
        \includegraphics[width=\textwidth]{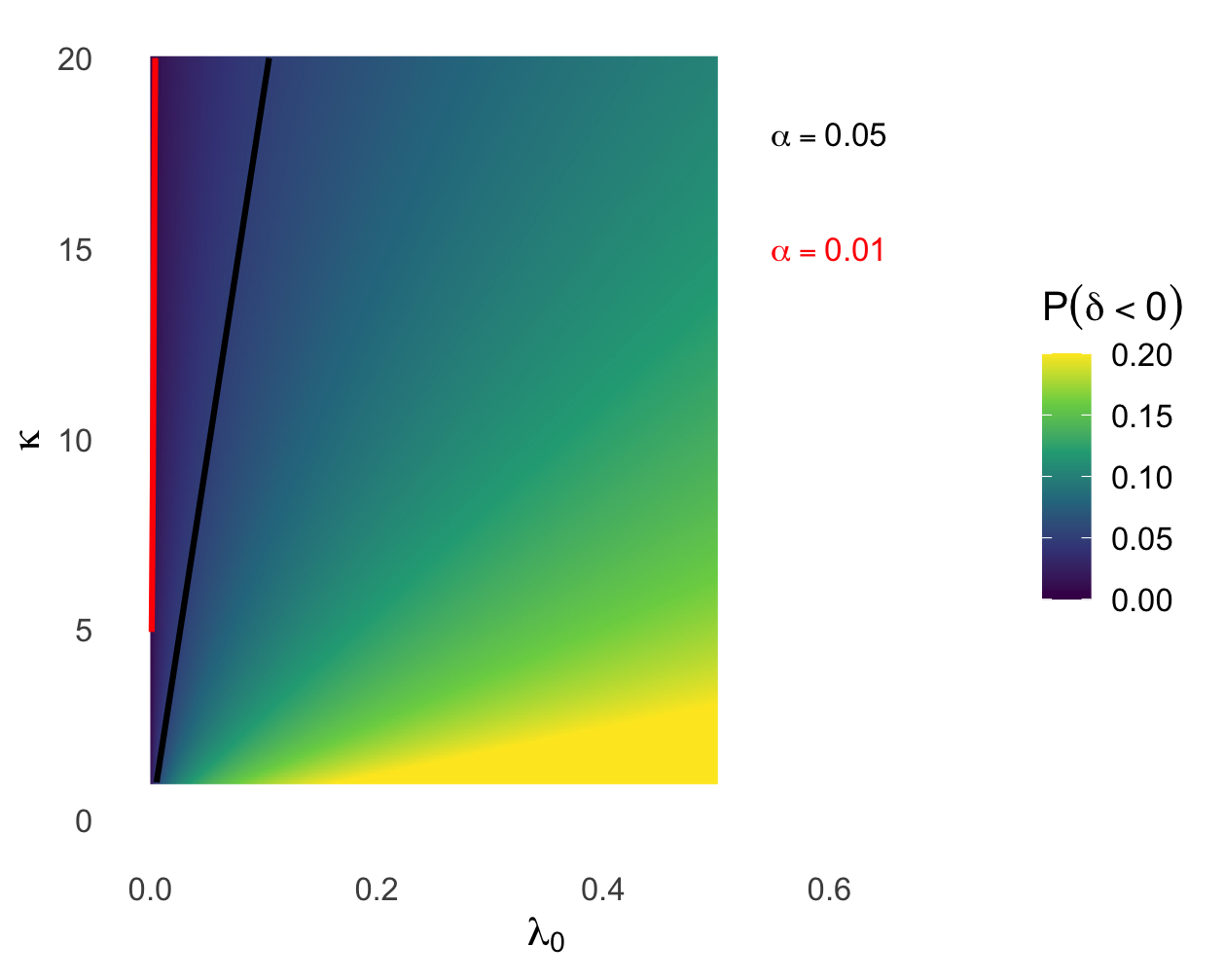}
        \caption{$\rho_0 = 0.6$}
        \label{fig:heatmap_rho06}
    \end{subfigure}
    \caption{Calibration function $P(\bm{\delta} > \boldsymbol{\tau})$ 
             for the enthusiastic prior across $(\kappa, \lambda_0)$ with 
             $\nu_0 = 2$. Contour lines indicate $\alpha = 0.05$ 
             (black) and $\alpha = 0.01$ (red). The contour lines 
             trace the $(\kappa, \lambda_0)$ pairs satisfying the 
             calibration in Proposition~\ref{prop:existence}.}
    \label{fig:prior_enth_heatmap}
\end{figure}

\end{document}